\newtheorem{theorem}{Theorem}
\newtheorem{lemma}{Lemma}
\newtheorem{corollary}{Corollary}
\theoremstyle{remark}
\newtheorem{remark}{Remark}
\newtheorem{assumption}{Assumption}
\newcommand{\ind}{\mathbbm{1}}
\newcommand{\er}{\mathsf{e}}
\newcommand{\sigmoid}{\sigma}
\begin{document}

\title{Distributed Learning  over Noisy Communication Networks}

\author{ Emrah Akyol  \IEEEmembership{Senior Member, IEEE}  and Marcos M. Vasconcelos  \IEEEmembership{Member, IEEE} 

\thanks{EA is with the ECE Department, Binghamton University, SUNY, NY, USA. MV is with the ECE Department, Florida State University, FL, USA. emails: eakyol@binghamton.edu, marcos@eng.famu.fsu.edu.} \thanks{This work is supported in part by the NSF via CCF/CIF (CAREER) \#2048042. }
}

\maketitle

\begin{abstract}
We study binary coordination games over graphs under log-linear learning when neighbor actions are conveyed through explicit noisy communication links. Each edge is modeled as either a binary symmetric channel (BSC) or a binary erasure channel (BEC). We analyze two operational regimes.  For binary
symmetric and binary erasure channels, we provide a structural characterization of the
induced learning dynamics. In a fast-communication regime, agents update using channel-averaged payoffs; the resulting learning dynamics coincide with a Gibbs sampler for a scaled coordination potential, where channel reliability enters only through a scalar attenuation coefficient. In a snapshot regime, agents update from a single noisy realization and ignore channel statistics; the induced Markov chain is generally nonreversible, but admits a high-temperature expansion whose drift matches that of the fast Gibbs sampler with the same attenuation. We further formalize a finite-$K$ communication budget, which interpolates between snapshot and fast behavior as the number of channel uses per update grows. This viewpoint yields a communication-theoretic interpretation in terms of retransmissions and repetition coding, and extends naturally to heterogeneous link reliabilities via effective edge weights. Numerical experiments illustrate the theory and quantify the tradeoff between communication resources and steady-state coordination quality.
\end{abstract}

\begin{IEEEkeywords}
coordination games, noisy communication, log-linear learning, Gibbs measures, networked systems
\end{IEEEkeywords}

\section{Introduction}
\label{sec:intro}

Large-scale networked systems are now a cornerstone of modern communication, computation, and control infrastructures. Applications ranging from machine learning and distributed sensing to cooperative robotics, smart grids, and Internet-of-Things (IoT) platforms increasingly rely on collections of agents that coordinate through local message exchanges over communication networks. In these systems, global objectives, such as consensus, distributed optimization, collaborative inference, or learning a shared model, must be achieved through iterative local interactions constrained by the physical properties of the underlying communication medium.

From a communications perspective, a defining feature of such distributed cooperative systems is that information exchange is inherently imperfect. Bandwidth limitations, noise, quantization, packet losses, interference, and intermittent connectivity are fundamental constraints in wireless and large-scale wired networks. These impairments are typically addressed at the PHY, MAC, and network layers through coding, scheduling, and retransmission mechanisms. However, even when such mechanisms are present, higher-layer distributed algorithms operate on information that is delayed, noisy, or partially observed. As a result, communication constraints ultimately shape the collective performance of distributed learning and coordination protocols.

This paper studies distributed coordination through the lens of {log-linear learning} (LLL) over communication-constrained networks. LLL and related stochastic best-response dynamics are widely used abstractions for decentralized adaptation in networked systems, with connections to distributed optimization, consensus algorithms, and learning in games. Their appeal lies in their simplicity and robustness: agents update using local information and randomized rules that naturally tolerate uncertainty. The existing literature, however, typically assumes that agents observe their neighbors’ actions or states reliably, or models communication imperfections indirectly as abstract noise in their utility functions \cite{MardenShamma2012} or random link failures \cite{MuralidharanMostofiMILCOM2015}.

In contrast, we explicitly model the physical communication layer. We focus on binary coordination games over graphs, which serve as a minimal yet expressive abstraction for  cooperative decision-making in distributed systems. Agents exchange information with neighbors through noisy communication channels, such as binary symmetric or erasure channels, and base their updates on the received messages. This explicit channel-level modeling allows us to directly connect communication reliability to effective inter-agent coupling strengths, providing a principled bridge between communication theory and distributed learning dynamics.

\subsection{Contributions}
\label{subsec:contrib}

 \emph{Exact Gibbs structure in the fast regime:} For BSC and BEC links, fast communication turns LLL into a single-site Gibbs sampler
for a scaled coordination potential. Communication enters only through a scalar attenuation
coefficient $\kappa$ defined in \eqref{eq:kappa_def}.

\emph{Snapshot regime as a nonequilibrium process with a controlled high-temperature limit:}
Snapshot updates generate a generally nonreversible Markov chain.
For small inverse temperature $\beta$, we derive a first-order expansion
that recovers the same drift as the fast Gibbs sampler, with the same attenuation \eqref{eq:kappa_def}.

 \emph{Finite-$K$ communication budget and interpolation:}
We formalize an intermediate model in which each update uses $K$ channel uses per neighbor.
We prove entrywise kernel convergence to the fast regime as $K\to\infty$
and convergence of stationary distributions on the finite state space.

 \emph{Communication-theoretic interpretation and heterogeneity:}
The finite-$K$ model yields a concrete interpretation in terms of retransmissions and
repetition coding. We also extend the fast regime to heterogeneous link qualities,
showing that reliability simply rescales edge weights in the Gibbs potential.

Overall, this work contributes to the understanding of distributed coordination over communication-constrained networks by providing a unified analytical framework that connects channel models, network topology, and learning dynamics. By making the role of communication explicit, our results complement existing research on communication-efficient distributed optimization and learning, offers new insights into designing novel communication-aware, scalable, and resilient distributed cooperative systems. 

\subsection{Related Work}
\label{subsec:related}

The modern use of LLL in games traces to Blume~\cite{Blume1993}, who made explicit the Gibbs connection for potential games. Subsequent work developed robustness, implementability, and structural relaxations; see, e.g., Marden and Shamma's revisiting of LLL under asynchrony and payoff-based feedback~\cite{MardenShamma2012}. In parallel, the algorithmic game theory community analyzed logit dynamics beyond potential maximizers, including bounds on mixing time and metastability; see Auletta et al.\ for general mixing bounds and social welfare under logit dynamics~\cite{AulettaJACM2011,AulettaSAGT2010} and metastability phenomena in coordination games~\cite{AulettaSODA2012}. Classical selection under perturbations via evolutionary logit response and mutations goes back to Young and to Kandori--Mailath--Rob~\cite{Young1993,KMR1993}.

There is a substantial literature showing that LLL is robust to \emph{utility} noise and limited feedback, including finite-time convergence to $\varepsilon$-efficient Nash equilibria in general potential games (under Lipschitz regularity and noisy/partial feedback)~\cite{Maddux2024}. There is also work on {stochastic connectivity}, e.g., BLLL over randomly available links,  that derives conditions on link-up probabilities for concentrating the stationary measure on potential maximizers~\cite{MuralidharanMostofiMILCOM2015}. However, to our knowledge, these strands do not model a {per-edge observation channel} of neighbor actions and drive the logit via the associated {channel likelihood/posterior}. Our formulation fills this gap by (i) placing a BSC$(p)$ or BEC($\varepsilon$) on each link, (and (ii) showing a two-regime attenuation law, $\kappa =(1-2p)$ for BSC and $\kappa =(1-\varepsilon)$ for BEC  in the averaged (fast-channel) regime and in the snapshot regime to first order in~$\beta$.

\section{Model and Preliminaries}
\label{sec:model}

\subsection{Coordination game and potential}

Let $G=(V,E)$ be an undirected graph with $|V|=n$ and symmetric nonnegative weights
$v_{ij}=v_{ji}$ for $\{i,j\}\in E$.

Each agent $i$ selects an action $x_i\in\{0,1\}$, and the joint action profile is $x=(x_1,\dots,x_n)$. We consider a binary coordination game in which agents are rewarded for matching the actions of their neighbors. The utility of agent $i$ is
\[
u_i(x)=\sum_{j\in N(i)} v_{ij}\,\mathbf{1}\{x_i=x_j\}.
\]
This game is a potential game with potential function
\begin{equation}
\Phi(x)
\triangleq
\sum_{\{i,j\}\in E} v_{ij}\,\ind\{x_i=x_j\}.
\label{eq:Phi_def}
\end{equation}
The two consensus profiles $x=\mathbf{0}$ and $x=\mathbf{1}$ are Nash equilibria and represent global coordination.

For each $i$, we define 
\begin{equation}
m_i(b;x)
\triangleq
\sum_{j\in N(i)} v_{ij}\,\ind\{x_j=b\},
\qquad b\in\{0,1\},
\label{eq:mi_def}
\end{equation}
where $N(i)\triangleq \{j:\{i,j\}\in E\}$.

Only edges incident to $i$ change when $x_i$ changes. In particular,
for any $b,b'\in\{0,1\}$,
\begin{equation}
\Phi(b,x_{-i})-\Phi(b',x_{-i})
=
m_i(b;x)-m_i(b';x).
\label{eq:Phi_local_diff}
\end{equation}

We also define the single-site potential difference
\begin{equation}
\Delta\Phi_i(x)
\triangleq
\Phi(1,x_{-i})-\Phi(0,x_{-i}),
\label{eq:DeltaPhi_def}
\end{equation}
which is equivalen tto, via \eqref{eq:Phi_local_diff}: \begin{equation} \label{rel}\Delta\Phi_i(x)=m_i(1;x)-m_i(0;x). \end{equation}

 In exact-potential coordination games (such as the one at hand), the alignment between individual incentives and a global potential $\Phi(x)$ makes equilibrium selection naturally interpretable as an instance of entropy-regularized optimization. In particular, consider the {free-energy} maximization over distributions $\mu$ on the joint-action space,
\begin{equation}
\max_{\mu \in \Delta(\mathcal{X})}
\;\;
\mathbb{E}_{\mu}\!\left[\Phi(x)\right]
+
\frac{1}{\beta}\, H(\mu),
\label{eq:free_energy}
\end{equation}
where $H(\mu)\triangleq -\sum_{a\in\mathcal{X}} \mu(x)\log \mu(x)$ is the Shannon entropy and $\beta>0$ is an inverse-temperature parameter controlling the exploration-exploitation tradeoff. The unique optimizer of~\eqref{eq:free_energy} is the Gibbs distribution
\begin{equation}
\mu^\star(x)
=
\frac{\exp\!\big(\beta \Phi(x)\big)}{\sum_{x'\in\mathcal{X}}\exp\!\big(\beta \Phi(x')\big)},
\qquad x\in\mathcal{X},
\label{eq:gibbs}
\end{equation}
i.e., the maximum-entropy distribution consistent with a soft preference for high-potential configurations. Log-linear learning (logit response) provides a fully decentralized mechanism that realizes~\eqref{eq:gibbs}: when agents update asynchronously by sampling actions according to
\begin{equation}
\Pr\!\big(x_i \mid x_{-i}\big)
=
\frac{\exp\!\big(\beta\, u_i(x_i,x_{-i})\big)}
{\sum_{x_i' \in \mathcal{X}_i}\exp\!\big(\beta\, u_i(x_i',x_{-i})\big)},
\label{eq:logit}
\end{equation}
the induced Markov chain is reversible and admits $\mu^\star$ as its stationary distribution whenever $\Phi$ is a potential for the utilities $\{u_i\}$. Hence, log-linear learning can be viewed as a distributed Gibbs sampler on $\mathcal{X}$: each local update implements a conditional Gibbs step, and the temperature $\beta^{-1}$ tunes the degree of randomization. This equivalence links coordination-game learning dynamics to entropy-regularized (smoothed) combinatorial optimization, with the limiting regime $\beta\to\infty$ concentrating $\mu^\star$ on global maximizers of $\Phi$, while finite $\beta$ yields controlled exploration and robustness in the presence of noise and model mismatch. This connection is made explicity in Section \ref{sec:variational} to interpret our results for distributed optimization and learning problems. 

\subsection{Communication model}
\label{subsec:channel_model}

When agent $i$ updates, it receives from each neighbor $j\in N(i)$
a symbol $y_{ij}$ generated from $x_j$ through a memoryless channel.

\subsubsection{Binary symmetric channel}

Each link $\{i,j\}$ is modeled as a binary symmetric channel (BSC) with crossover probability $p\in[0,1/2)$. Agent $i$ receives $y_{ij}=x_j$ with probability $1-p$ and $y_{ij}=1-x_j$ with probability $p$. More formally, for $p\in[0,\tfrac12)$,
\begin{equation}
y_{ij}
=
x_j\oplus \eta_{ij},
\label{eq:bsc_model}
\end{equation}
where $\eta_{ij}\sim \mathrm{Bernoulli}(p)$ is independent across edges and update times. 

\subsubsection{Binary erasure channel}
Alternatively, communication may occur over a binary erasure channel (BEC) with erasure probability $\varepsilon\in[0,1)$. Agent $i$ receives $y_{ij}=x_j$ with probability $1-\varepsilon$ and an erasure symbol $\er$ with probability $\varepsilon$, ie.,  for $\epsilon\in[0,1)$,
\begin{equation}
y_{ij}
=
\begin{cases}
x_j, & \text{with probability }1-\epsilon,\\
\er, & \text{with probability }\epsilon.
\end{cases}
\label{eq:bec_model}
\end{equation}

In the BEC case, erased symbols contribute zero payoff, i.e.,
$\ind\{\er=b\}=0$ for $b\in\{0,1\}$.

Our analysis show that both channels can be characterized via a unified scalar attenuation coefficient that we will refer to throughout the paper: 
\begin{equation}
\kappa
\triangleq
\begin{cases}
1-2p, & \text{for $\mathrm{BSC}(p)$},\\
1-\epsilon, & \text{for $\mathrm{BEC}(\epsilon)$}.
\end{cases}
\label{eq:kappa_def}
\end{equation}

\subsection{Asynchronous log-linear learning}
\label{subsec:lll}

Agents update asynchronously using log linear learning. At each time step, one agent $i$ is selected uniformly at random and chooses action $b\in\{0,1\}$ with probability

\begin{equation}
\mathbb{P}(x_i^+=b)
=
\frac{\exp(\beta\,\widehat u_i(b))}
{\sum_{b'\in\{0,1\}}\exp(\beta\,\widehat u_i(b'))}.
\label{eq:logit_general}
\end{equation}
where  $\widehat u_i(b)$ is teh payoff estimate for $b\in\{0,1\}$. This is equivalent to a logit update with inverse temperature $\beta>0$, where the logistic function is: 
\begin{equation}
\sigmoid(t)
\triangleq
\frac{1}{1+e^{-t}}.
\label{eq:sigmoid_def}
\end{equation}

\section{Snapshot and Fast Communication Regimes}
\label{sec:regimes}

\subsection{Snapshot regime}
\label{subsec:snapshot}

In the snapshot regime, agent $i$ updates using a single noisy observation
$y_i=(y_{ij})_{j\in N(i)}$ and treats it as the true neighbor action profile. Here each agent update is based on a single realization of the communication process. i.e., agents do not average over channel noise and instead respond to instantaneous noisy observations. 

The snapshot payoff estimate is agent $i$ taking action $b$ is
\begin{equation}
\widehat u_i^{\mathrm{S}}(b\mid y_i)
\triangleq
\sum_{j\in N(i)} v_{ij}\,\ind\{y_{ij}=b\},
\qquad b\in\{0,1\}.
\label{eq:uhat_snap_def}
\end{equation}

The following expression that represents the difference in payoffs (advantage) will be used throughout the paper: 
\begin{equation}
\widehat\Delta_i^{\mathrm{S}}(y_i)
\triangleq
\widehat u_i^{\mathrm{S}}(1\mid y_i)-\widehat u_i^{\mathrm{S}}(0\mid y_i).
\label{eq:Delta_snap_def}
\end{equation}

Using \eqref{eq:logit_general}, the update depends only on the advantage \eqref{eq:Delta_snap_def},
\begin{equation}
\mathbb{P}(x_i^+=1\mid y_i)
=
\sigmoid\!\left(\beta\,\widehat\Delta_i^{\mathrm{S}}(y_i)\right).
\label{eq:snapshot_logit_1}
\end{equation}
\begin{equation}
\mathbb{P}(x_i^+=0\mid y_i)
=
\sigmoid\!\left(-\beta\,\widehat\Delta_i^{\mathrm{S}}(y_i)\right).
\label{eq:snapshot_logit_0}
\end{equation}

Averaging over the channel and the random node selection yields a Markov kernel $P_\beta^{\mathrm{S}}$ on $\{0,1\}^n$.
For $x_i=0$,
\begin{equation}
P_\beta^{\mathrm{S}}\!\left(x,x^{(i,1)}\right)
=
\frac{1}{n}\,
\mathbb{E}\!\left[
\sigmoid\!\left(\beta\,\widehat\Delta_i^{\mathrm{S}}(y_i)\right)
\,\middle|\,
x
\right].
\label{eq:P_snap_forward}
\end{equation}
For $x_i=1$,
\begin{equation}
P_\beta^{\mathrm{S}}\!\left(x,x^{(i,0)}\right)
=
\frac{1}{n}\,
\mathbb{E}\!\left[
\sigmoid\!\left(-\beta\,\widehat\Delta_i^{\mathrm{S}}(y_i)\right)
\,\middle|\,
x
\right].
\label{eq:P_snap_backward}
\end{equation}
It is unclear at first sight that such dynamics induce a unique stationary distribution. The following theorem states broadly states the ergodicity of the snapshot dynamics, hence its existence and  properties.

\begin{theorem}
\label{thm:snap_stationary}
The Markov chain with transition matrix $P_\beta^{\mathrm{S}}$ is irreducible and aperiodic.
Hence it admits a unique stationary distribution $\pi_\beta^{\mathrm{S}}$.
\end{theorem}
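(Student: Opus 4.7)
The plan is to exploit the fact that, for any finite $\beta$, the logistic function $\sigmoid(t)$ lies strictly in $(0,1)$ for every $t\in\mathbb{R}$. Because $|\widehat\Delta_i^{\mathrm{S}}(y_i)|\le \sum_{j\in N(i)} v_{ij}<\infty$ for every channel realization, the conditional flip probabilities in \eqref{eq:snapshot_logit_1}--\eqref{eq:snapshot_logit_0} are pointwise bounded away from both $0$ and $1$ by a constant depending only on $\beta$ and the edge weights; averaging over the channel law preserves this strict positivity, since the expectation of a strictly positive, strictly sub-unital random variable is itself in $(0,1)$. This single observation drives both parts of the theorem.

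To establish irreducibility, I would fix arbitrary $x,z\in\{0,1\}^n$ and let $d$ be their Hamming distance. I would then construct an explicit path $x=x^{(0)}\to x^{(1)}\to\cdots\to x^{(d)}=z$ in which each $x^{(k+1)}$ differs from $x^{(k)}$ in exactly one coordinate $i_k$ where $x^{(k)}$ and $z$ disagree. Along this path, each single-coordinate transition has probability at least $\tfrac{1}{n}\cdot\min\{\sigmoid(\beta M),\sigmoid(-\beta M)\}>0$, where $M=\max_i\sum_{j\in N(i)} v_{ij}$. Multiplying the $d\le n$ factors yields $(P_\beta^{\mathrm{S}})^d(x,z)>0$, and irreducibility follows.

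For aperiodicity, I would observe that the one-step self-loop probability $P_\beta^{\mathrm{S}}(x,x)$ is strictly positive for every $x$: with probability $1/n$ the scheduler selects some agent $i$, and by the same sigmoid bound the probability that agent $i$ re-selects its current action $x_i$ is bounded away from $0$. Hence every state has period one. On the finite state space $\{0,1\}^n$, irreducibility together with aperiodicity yields a unique stationary distribution $\pi_\beta^{\mathrm{S}}$ via the standard finite Markov chain theorem. I do not anticipate any genuine obstacle here; the substantive content is simply that logistic randomization prevents the chain from either locking into or deterministically excluding any configuration, and the only minor subtlety is noting that taking the channel expectation preserves the strict positivity that $\sigmoid$ enjoys pointwise in $y_i$.
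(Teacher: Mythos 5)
Your proposal is correct and follows essentially the same route as the paper's proof: strict positivity of the logistic update (preserved under channel averaging) gives positive single-coordinate transition probabilities, a coordinate-by-coordinate path gives irreducibility, self-loops give aperiodicity, and the finite-chain theorem concludes. Your version merely adds explicit quantitative bounds (the uniform lower bound $\tfrac{1}{n}\sigmoid(-\beta M)$), which the paper's proof leaves implicit.
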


\begin{proof}
For every $y_i$ and $\beta>0$, \eqref{eq:snapshot_logit_1}-\eqref{eq:snapshot_logit_0} imply
$\mathbb{P}(x_i^+=0\mid y_i)\in(0,1)$ and $\mathbb{P}(x_i^+=1\mid y_i)\in(0,1)$.
Thus, for every $x$ and every $i$, the probability of setting $x_i$ to either value is strictly positive
after averaging over $y_i$.

From any $x$ to any $x'$ one can reach $x'$ by finitely many single-coordinate changes.
Each such single-coordinate change occurs with positive probability under $P_\beta^{\mathrm{S}}$ as shown above. 
Hence the chain is irreducible.

Fix $x$. Select any $i$. With positive probability the update chooses $x_i^+=x_i$,
hence $P_\beta^{\mathrm{S}}(x,x)>0$ and the chain is aperiodic.

A finite irreducible aperiodic Markov chain has a unique stationary distribution.
\end{proof}

\subsubsection{High-temperature expansion}
We first present this auxilliary result.
\begin{lemma}
\label{lem:mean_snap_adv}
For either channel model, the mean snapshot advantage satisfies
\begin{equation}
\mathbb{E}\!\left[\widehat\Delta_i^{\mathrm{S}}(y_i)\,\middle|\,x\right]
=
\kappa\,\Delta\Phi_i(x),
\label{eq:E_Delta_snap}
\end{equation}
where $\kappa$ is defined in \eqref{eq:kappa_def}.
\end{lemma}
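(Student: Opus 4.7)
The plan is to expand $\widehat\Delta_i^{\mathrm{S}}(y_i)$ into a sum over incident edges and then compute the per-edge conditional mean for each channel model, observing that both give the same linear formula in $x_j$ with proportionality constant $\kappa$.

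First, I will rewrite
\[
\widehat\Delta_i^{\mathrm{S}}(y_i)=\sum_{j\in N(i)} v_{ij}\bigl(\ind\{y_{ij}=1\}-\ind\{y_{ij}=0\}\bigr),
\]
directly from \eqref{eq:uhat_snap_def} and \eqref{eq:Delta_snap_def}. Because the noise variables $\{\eta_{ij}\}$ (or the erasure events) are independent across edges and independent of $x$, I can apply linearity and reduce the problem to computing $\mathbb{E}[\ind\{y_{ij}=1\}-\ind\{y_{ij}=0\}\mid x_j]$ for each channel model.

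Next I would treat the two channels in parallel. For BSC$(p)$ using \eqref{eq:bsc_model}: if $x_j=1$ then $\mathbb{P}(y_{ij}=1\mid x_j)=1-p$ and $\mathbb{P}(y_{ij}=0\mid x_j)=p$, giving expected difference $1-2p$; if $x_j=0$ the roles swap and the expected difference is $-(1-2p)$. For BEC$(\varepsilon)$ using \eqref{eq:bec_model}: if $x_j=1$, $y_{ij}$ is $1$ with probability $1-\varepsilon$ and an erasure (which satisfies $\ind\{\er=b\}=0$) with probability $\varepsilon$, giving expected difference $1-\varepsilon$; if $x_j=0$, it is $-(1-\varepsilon)$. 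In both cases the per-edge expectation is $\kappa\,(2x_j-1)=\kappa\bigl(\ind\{x_j=1\}-\ind\{x_j=0\}\bigr)$ with $\kappa$ as in \eqref{eq:kappa_def}.

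Finally, I substitute back, pull $\kappa$ out of the sum, and recognize the remaining quantity as $m_i(1;x)-m_i(0;x)$ via \eqref{eq:mi_def}, which equals $\Delta\Phi_i(x)$ by \eqref{rel}. This yields \eqref{eq:E_Delta_snap}. There is no significant obstacle; the only care needed is making sure the erasure symbol is handled correctly in the BEC case (both $\ind\{y_{ij}=0\}$ and $\ind\{y_{ij}=1\}$ vanish on erasures, so they do not bias the difference), which is exactly what makes the two channels fit under the single attenuation $\kappa$.
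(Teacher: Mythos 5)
Your proposal is correct and follows essentially the same route as the paper's proof: expand $\widehat\Delta_i^{\mathrm{S}}$ into per-edge indicator differences, compute the conditional mean for each channel to obtain the common factor $\kappa\bigl(\ind\{x_j=1\}-\ind\{x_j=0\}\bigr)$, and sum to recover $\kappa\,\Delta\Phi_i(x)$ via \eqref{rel}. No gaps.
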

\begin{proof}
By \eqref{eq:uhat_snap_def} and \eqref{eq:Delta_snap_def},
\begin{equation}
\widehat\Delta_i^{\mathrm{S}}(y_i)
=
\sum_{j\in N(i)} v_{ij}
\Big(\ind\{y_{ij}=1\}-\ind\{y_{ij}=0\}\Big).
\label{eq:Delta_expand_app}
\end{equation}
Taking conditional expectation given $x$ and using linearity,
\begin{equation}
\mathbb{E}\!\left[\widehat\Delta_i^{\mathrm{S}}(y_i)\,\middle|\,x\right]
=
\sum_{j\in N(i)} v_{ij}\,
\mathbb{E}\!\left[\ind\{y_{ij}=1\}-\ind\{y_{ij}=0\}\,\middle|\,x_j\right].
\label{eq:Delta_mean_app}
\end{equation}

For $\mathrm{BSC}(p)$, the inner expectation equals $(1-2p)$ when $x_j=1$ and equals $-(1-2p)$ when $x_j=0$.
For $\mathrm{BEC}(\epsilon)$ (with $\ind\{\er=b\}=0$), it equals $(1-\epsilon)$ when $x_j=1$ and equals $-(1-\epsilon)$ when $x_j=0$.
In both cases,
\begin{equation}
\mathbb{E}\!\left[\ind\{y_{ij}=1\}-\ind\{y_{ij}=0\}\,\middle|\,x_j\right]
\!=\!
\kappa\Big(\ind\{x_j=1\}-\ind\{x_j=0\}\Big)\!
\label{eq:inner_mean_app}
\end{equation}
where $\kappa$ is given by \eqref{eq:kappa_def}.
Substituting \eqref{eq:inner_mean_app} into \eqref{eq:Delta_mean_app} yields
\begin{equation}
\mathbb{E}\!\left[\widehat\Delta_i^{\mathrm{S}}(y_i)\,\middle|\,x\right]
=
\kappa\big(m_i(1;x)-m_i(0;x)\big).
\label{eq:E_Delta_app}
\end{equation}
Using $\Delta\Phi_i(x)=m_i(1;x)-m_i(0;x)$ from \eqref{eq:DeltaPhi_def} completes the proof.
\end{proof}

\begin{theorem}
\label{thm:2}
Fix $x\in\{0,1\}^n$ and $i\in V$. As $\beta\to 0$,
\begin{equation}
\mathbb{P}(x_i^+=1\mid x)
=
\frac{1}{2}
+
\frac{\beta}{4}\,
\kappa\,\Delta\Phi_i(x)
+
o(\beta).
\label{eq:snapshot_smallbeta_prob}
\end{equation}
Equivalently, for $x_i=0$,
\begin{equation}
P_\beta^{\mathrm{S}}\!\left(x,x^{(i,1)}\right)
=
\frac{1}{2n}
+
\frac{\beta}{4n}\,
\kappa\,\Delta\Phi_i(x)
+
o(\beta).
\label{eq:snapshot_smallbeta_kernel}
\end{equation}
\end{theorem}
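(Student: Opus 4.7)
The identity to prove is just a small-$\beta$ Taylor expansion of the logistic map, combined with the conditional-mean formula already established in Lemma~\ref{lem:mean_snap_adv}. The starting point is the exact representation
\[
\mathbb{P}(x_i^+ = 1 \mid x) \;=\; \mathbb{E}\!\left[\sigmoid\!\big(\beta\,\widehat\Delta_i^{\mathrm{S}}(y_i)\big)\,\middle|\,x\right],
\]
obtained from \eqref{eq:snapshot_logit_1} by averaging over the channel output given~$x$ (the randomness in $y_i$ is the only remaining source of randomness once $x$ is fixed and agent $i$ has been selected).

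The core step is to expand $\sigmoid$ around~$0$. Using $\sigmoid(0)=1/2$ and $\sigmoid'(0)=1/4$, and the fact that $\sigmoid$ is smooth with uniformly bounded third derivative on $\mathbb{R}$, Taylor's theorem gives
\[
\sigmoid(t) \;=\; \frac{1}{2} + \frac{t}{4} + R(t), \qquad |R(t)| \;\le\; C\,|t|^{3}
\]
for an absolute constant~$C$. Substituting $t = \beta\,\widehat\Delta_i^{\mathrm{S}}(y_i)$ and taking conditional expectation given $x$,
\[
\mathbb{P}(x_i^+ = 1 \mid x)
\;=\; \frac{1}{2}
+ \frac{\beta}{4}\,\mathbb{E}\!\left[\widehat\Delta_i^{\mathrm{S}}(y_i)\,\middle|\,x\right]
+ \mathbb{E}\!\left[R\!\big(\beta\,\widehat\Delta_i^{\mathrm{S}}(y_i)\big)\,\middle|\,x\right].
\]
The linear term is evaluated immediately by Lemma~\ref{lem:mean_snap_adv}, yielding $(\beta/4)\,\kappa\,\Delta\Phi_i(x)$, which is exactly the claimed first-order contribution.

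The only genuine point to check is the remainder. Since $\widehat\Delta_i^{\mathrm{S}}(y_i)$ is a weighted sum of indicator differences over the finite neighborhood $N(i)$, the deterministic bound
\[
\bigl|\widehat\Delta_i^{\mathrm{S}}(y_i)\bigr| \;\le\; \sum_{j\in N(i)} v_{ij} \;\triangleq\; M_i
\]
holds for every realization of $y_i$. Hence
\[
\left|\mathbb{E}\!\left[R\!\big(\beta\,\widehat\Delta_i^{\mathrm{S}}(y_i)\big)\,\middle|\,x\right]\right|
\;\le\; C\,M_i^{3}\,\beta^{3} \;=\; o(\beta),
\]
which establishes \eqref{eq:snapshot_smallbeta_prob}. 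The kernel statement \eqref{eq:snapshot_smallbeta_kernel} then follows from \eqref{eq:P_snap_forward}, whose definition includes the $1/n$ factor from uniform agent selection, applied in the case $x_i = 0$ so that the target $x^{(i,1)}$ corresponds precisely to the event $\{x_i^+ = 1\}$.

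I do not anticipate any serious obstacle: the argument is a bounded-remainder Taylor expansion, and both the structural cancellation producing $\kappa\,\Delta\Phi_i(x)$ and the uniform boundedness of $\widehat\Delta_i^{\mathrm{S}}$ are already in hand. The only care needed is to keep track of the constants (in particular $\sigmoid'(0)=1/4$, which is where the factor $1/4$ in the statement originates) and to invoke Lemma~\ref{lem:mean_snap_adv} uniformly for both the BSC and the BEC through the common attenuation $\kappa$ defined in \eqref{eq:kappa_def}.
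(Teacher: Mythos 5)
Your proposal is correct and follows essentially the same route as the paper: write $\mathbb{P}(x_i^+=1\mid x)$ as the conditional expectation of $\sigmoid(\beta\,\widehat\Delta_i^{\mathrm{S}}(y_i))$, Taylor-expand $\sigmoid$ at $0$, evaluate the linear term via Lemma~\ref{lem:mean_snap_adv}, and divide by $n$ for the kernel form. Your explicit uniform remainder bound $|R(t)|\le C|t|^3$ together with the deterministic bound $|\widehat\Delta_i^{\mathrm{S}}(y_i)|\le \sum_{j\in N(i)}v_{ij}$ is a slightly more careful justification of the $o(\beta)$ term than the paper's pointwise $o(t)$ substitution inside the expectation, but it is the same argument in substance.
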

\begin{proof}
Conditioning on $x$ and using \eqref{eq:snapshot_logit_1},
\begin{equation}
\mathbb{P}(x_i^+=1\mid x)
=
\mathbb{E}\!\left[\sigmoid\!\left(\beta\,\widehat\Delta_i^{\mathrm{S}}(y_i)\right)\,\middle|\,x\right].
\label{eq:snap_prob_app}
\end{equation}
Since $\sigmoid(0)=1/2$ and $\sigmoid'(0)=1/4$,
\begin{equation}
\sigmoid(t)
=
\frac12+\frac{t}{4}+o(t),
\qquad t\to 0.
\label{eq:sigmoid_expansion_app}
\end{equation}
Substituting \eqref{eq:sigmoid_expansion_app} into \eqref{eq:snap_prob_app} yields
\begin{equation}
\mathbb{P}(x_i^+=1\mid x)
=
\frac12+\frac{\beta}{4}\,
\mathbb{E}\!\left[\widehat\Delta_i^{\mathrm{S}}(y_i)\,\middle|\,x\right]
+o(\beta).
\label{eq:snap_prob_expand_app}
\end{equation}
Lemma~\ref{lem:mean_snap_adv} provides \eqref{eq:E_Delta_snap}, giving \eqref{eq:snapshot_smallbeta_prob}.
Multiplying by $1/n$ yields \eqref{eq:snapshot_smallbeta_kernel}.
\end{proof}
\begin{corollary}
\label{cor:gibbs_like_first_order_snap}
Define the Gibbs candidate
\begin{equation}
\widetilde{\pi}_\beta(x)\triangleq
\frac{\exp\!\big(\beta \kappa \Phi(x)\big)}
{\sum_{z\in\{0,1\}^n}\exp\!\big(\beta \kappa \Phi(z)\big)}.
\label{eq:gibbs_candidate_snap_unaware}
\end{equation}
Fix an adjacent pair $x$ and $x^{(i,1)}$ with $x_i=0$.
Then, as $\beta\to 0$,
\begin{equation}
\log\frac{P_\beta^{\mathrm{S}}(x,x^{(i,1)})}{P_\beta^{\mathrm{S}}(x^{(i,1)},x)}
=
\beta \kappa \big(\Phi(x^{(i,1)})-\Phi(x)\big)+o(\beta),
\label{eq:logratio_first_order_snap}
\end{equation}
and consequently
\begin{equation}
\log\frac{\widetilde{\pi}_\beta(x)\,P_\beta^{\mathrm{S}}(x,x^{(i,1)})}
{\widetilde{\pi}_\beta(x^{(i,1)})\,P_\beta^{\mathrm{S}}(x^{(i,1)},x)}
=
o(\beta).
\label{eq:approx_detailed_balance_first_order}
\end{equation}
\end{corollary}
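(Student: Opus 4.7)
The plan is to read off the leading-order behavior of both directions of the flip $\{x,x^{(i,1)}\}$ from Theorem~\ref{thm:2}, and then invoke the elementary expansion $\log\!\frac{1+u}{1-u}=2u+o(u)$ as $u\to 0$ to collapse the log-ratio of kernels into a single potential difference. The second claim then drops out by direct comparison with the exact log-ratio of the Gibbs candidate weights.

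First, I would apply \eqref{eq:snapshot_smallbeta_kernel} directly to the forward move from $x$ (with $x_i=0$) to $x^{(i,1)}$, obtaining
\begin{equation*}
P_\beta^{\mathrm{S}}(x,x^{(i,1)})=\tfrac{1}{2n}\Bigl(1+\tfrac{\beta\kappa}{2}\Delta\Phi_i(x)\Bigr)+o(\beta).
\end{equation*}
For the reverse move I would rerun the argument of Theorem~\ref{thm:2} from the ``minus'' branch \eqref{eq:snapshot_logit_0}, using $\sigma(-t)=\tfrac12-\tfrac{t}{4}+o(t)$ together with Lemma~\ref{lem:mean_snap_adv}. The key observation is that $\Delta\Phi_i$ depends only on $x_{-i}$ by \eqref{eq:DeltaPhi_def}, so $\Delta\Phi_i(x^{(i,1)})=\Delta\Phi_i(x)$, giving
\begin{equation*}
P_\beta^{\mathrm{S}}(x^{(i,1)},x)=\tfrac{1}{2n}\Bigl(1-\tfrac{\beta\kappa}{2}\Delta\Phi_i(x)\Bigr)+o(\beta).
\end{equation*}

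Second, I would factor out $\tfrac{1}{2n}$ and take the log-ratio; combining the two displays above with $\log\!\frac{1+u}{1-u}=2u+o(u)$ yields
\begin{equation*}
\log\frac{P_\beta^{\mathrm{S}}(x,x^{(i,1)})}{P_\beta^{\mathrm{S}}(x^{(i,1)},x)}=\beta\kappa\,\Delta\Phi_i(x)+o(\beta),
\end{equation*}
and since $x_i=0$ the definition \eqref{eq:DeltaPhi_def} identifies $\Delta\Phi_i(x)=\Phi(x^{(i,1)})-\Phi(x)$, which is exactly \eqref{eq:logratio_first_order_snap}. For \eqref{eq:approx_detailed_balance_first_order}, I would note that the partition function in \eqref{eq:gibbs_candidate_snap_unaware} cancels in the ratio of Gibbs weights, so
\begin{equation*}
\log\frac{\widetilde{\pi}_\beta(x)}{\widetilde{\pi}_\beta(x^{(i,1)})}=-\beta\kappa\bigl(\Phi(x^{(i,1)})-\Phi(x)\bigr)
\end{equation*}
\emph{exactly}. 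Adding this identity to \eqref{eq:logratio_first_order_snap} cancels the $O(\beta)$ term and leaves precisely the $o(\beta)$ remainder claimed in \eqref{eq:approx_detailed_balance_first_order}.

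There is no substantial obstacle here beyond bookkeeping; the only point worth checking is that the arguments of the logarithms are of the form $1+O(\beta)$ and hence bounded away from zero for all sufficiently small $\beta$, so the Taylor expansion $\log(1+u)=u+O(u^2)$ is valid with a remainder of order $\beta^2=o(\beta)$. Because $x$ is fixed, $\Delta\Phi_i(x)$ is a bounded constant, so no uniformity issues arise and the $o(\beta)$ symbols compose cleanly under sum and under $\log$.
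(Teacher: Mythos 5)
Your proposal is correct and follows essentially the same route as the paper's proof: both read the forward and backward kernel expansions off Theorem~\ref{thm:2} (using that $\Delta\Phi_i$ depends only on $x_{-i}$), expand the logarithm at $1$, and cancel against the exact log-ratio of the Gibbs weights. Your treatment is, if anything, slightly more explicit about justifying the minus sign in the backward kernel, which the paper simply asserts.
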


\begin{proof}
Let $x_i=0$ and set $x'=x^{(i,1)}$.
By Theorem~\ref{thm:2},
\begin{align*}
P_\beta^{\mathrm{S}}(x,x')&=\frac{1}{2n}+\frac{\beta}{4n} \kappa \Delta\Phi_i(x)+o(\beta),
\\
P_\beta^{\mathrm{S}}(x',x)&=\frac{1}{2n}-\frac{\beta}{4n}\kappa \Delta\Phi_i(x)+o(\beta),
\end{align*}
since $\Phi(x')-\Phi(x)=\Delta\Phi_i(x)$.
A first-order expansion of the logarithm at $1$ yields
\eqref{eq:logratio_first_order_snap}.
Finally, by \eqref{eq:gibbs_candidate_snap_unaware},
\[
\log\frac{\widetilde{\pi}_\beta(x)}{\widetilde{\pi}_\beta(x')}
=
\beta \kappa \big(\Phi(x)-\Phi(x')\big),
\]
and adding this identity to \eqref{eq:logratio_first_order_snap} gives
\eqref{eq:approx_detailed_balance_first_order}.
\end{proof}
\begin{remark}
The snapshot chain is not reversible in general, hence $\pi_\beta^{\mathrm{S}}$ typically does not admit
a closed-form Gibbs expression. Theorem~\ref{thm:2} and
Corollary~\ref{cor:gibbs_like_first_order_snap} show that for $\beta\to 0$ the chain is
\emph{Gibbs-like to first order}, with effective potential coefficient $\kappa$.
\end{remark}

\subsection{Fast regime}
\label{subsec:fast}

We next consider a \emph{fast communication} regime in which communication operates on a faster timescale than agent updates.  In this regime, agents respond to \emph{channel-averaged} payoffs,
as if  channel statistics are known.

Agent $i$ uses
\begin{equation}
\widehat u_i^{\mathrm{F}}(b\mid x)
\triangleq
\sum_{j\in N(i)} v_{ij}\,
\mathbb{E}\!\left[\ind\{y_{ij}=b\}\,\middle|\,x_j\right],
\qquad b\in\{0,1\}.
\label{eq:uhat_fast_def}
\end{equation}

A key observation is that for both BSC and BEC, the conditional mean admits a common affine form. This is formalized in the following lemma: 

\begin{lemma}
For both BSC and BEC, there exists a channel-dependent constant $c$ such that
\begin{equation}
\mathbb{E}\!\left[\ind\{y_{ij}=b\}\,\middle|\,x_j\right]
=
c + \kappa\,\ind\{x_j=b\},
\qquad b\in\{0,1\},
\label{eq:channel_affine}
\end{equation}
where $c=p$ for the BSC and $c=0$ for the BEC, and $\kappa$ is as in \eqref{eq:kappa_def}.
\end{lemma}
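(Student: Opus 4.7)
The plan is to verify the claimed affine decomposition directly, by computing the two conditional probabilities $\mathbb{E}[\ind\{y_{ij}=b\}\mid x_j=0]$ and $\mathbb{E}[\ind\{y_{ij}=b\}\mid x_j=1]$ for each channel, and then reading off the constants $c$ and $\kappa$ by matching the two values of the indicator $\ind\{x_j=b\}\in\{0,1\}$. Since the quantity on the right-hand side is affine in the binary variable $\ind\{x_j=b\}$, fixing its values at $0$ and $1$ determines $(c,\kappa)$ uniquely, so the verification reduces to computing two numbers per channel and per value of $b$.

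First, for the BSC with crossover $p$, I would use the flip model \eqref{eq:bsc_model} to write $\mathbb{P}(y_{ij}=b\mid x_j)=(1-p)\ind\{x_j=b\}+p\,\ind\{x_j\neq b\}$ for $b\in\{0,1\}$, and rewrite $\ind\{x_j\neq b\}=1-\ind\{x_j=b\}$. Collecting terms gives $p+(1-2p)\ind\{x_j=b\}$, which matches \eqref{eq:channel_affine} with $c=p$ and $\kappa=1-2p$, consistent with \eqref{eq:kappa_def}. The same identity holds for both $b=0$ and $b=1$ thanks to the symmetry of the BSC.

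Next, for the BEC with erasure probability $\epsilon$, I would invoke the convention $\ind\{\er=b\}=0$ declared in Section~\ref{subsec:channel_model}. Using \eqref{eq:bec_model}, whenever $x_j=b$ we have $y_{ij}=b$ with probability $1-\epsilon$ and $y_{ij}=\er$ with probability $\epsilon$; whenever $x_j\neq b$ the event $\{y_{ij}=b\}$ has probability zero because $y_{ij}\in\{x_j,\er\}$. Hence $\mathbb{E}[\ind\{y_{ij}=b\}\mid x_j]=(1-\epsilon)\ind\{x_j=b\}$, which is \eqref{eq:channel_affine} with $c=0$ and $\kappa=1-\epsilon$.

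There is no serious obstacle in this argument: the claim is a one-line algebraic identity for each channel, and the potential pitfall is only notational, namely the convention that the erasure symbol is treated as contributing zero to $\ind\{\cdot=b\}$ so that $c=0$ (rather than $\epsilon/2$, as one might naively get from symmetrizing the erasure mass). I would therefore state this convention explicitly at the start of the BEC case, and then remark that the common affine form \eqref{eq:channel_affine} is what enables the unified attenuation coefficient $\kappa$ introduced in \eqref{eq:kappa_def} and used throughout Section~\ref{sec:regimes}.
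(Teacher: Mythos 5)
Your proof is correct, and the computation is the intended one: the paper actually states this lemma without proof, and the direct case analysis you give (writing $\ind\{x_j\neq b\}=1-\ind\{x_j=b\}$ for the BSC, and invoking the convention $\ind\{\er=b\}=0$ for the BEC) is precisely the same calculation the paper carries out inside its proof of Lemma~\ref{lem:mean_snap_adv} for the snapshot regime. Your explicit flagging of the erasure convention, which is what forces $c=0$ rather than a symmetrized value, is a worthwhile addition.
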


Substituting \eqref{eq:channel_affine} into \eqref{eq:uhat_fast_def} yields
\begin{equation}
\widehat u_i^{\mathrm{F}}(b\mid x)
=
\underbrace{c\sum_{j\in N(i)} v_{ij}}_{\text{independent of $b$}}
+
\kappa\,m_i(b;x).
\label{eq:uhat_fast_simplified}
\end{equation}

\subsubsection{Fast logit update}

The additive term in \eqref{eq:uhat_fast_simplified} cancels in the logit ratio, and the update becomes
\begin{equation}
\mathbb{P}(x_i^+=b\mid x)
=
\frac{\exp\!\left(\beta\kappa\,m_i(b;x)\right)}
{\sum_{b'\in\{0,1\}}\exp\!\left(\beta\kappa\,m_i(b';x)\right)}.
\label{eq:fast_logit}
\end{equation}

\subsubsection{Exact Gibbs structure}
In the following theorem, we show that the fast communication regime is Gibbs.
\begin{theorem}
\label{thm:fast_gibbs}
Assume $v_{ij}=v_{ji}$. Under the fast update \eqref{eq:fast_logit}, the induced Markov chain is reversible with stationary distribution
\begin{equation}
\pi_\beta^{\mathrm{F}}(x)
=
\frac{1}{Z_\beta^{\mathrm{F}}}\,
\exp\!\left(\beta\kappa\,\Phi(x)\right),
\label{eq:pi_fast}
\end{equation}
where
\begin{equation}
Z_\beta^{\mathrm{F}}
=
\sum_{z\in\{0,1\}^n}
\exp\!\left(\beta\kappa\,\Phi(z)\right).
\label{eq:Z_fast}
\end{equation}
\end{theorem}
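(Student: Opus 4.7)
The plan is to recognize the fast update as a single-site Gibbs sampler for the scaled potential $\beta\kappa\Phi$ and then verify detailed balance directly against the claimed stationary distribution.

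First I would use the local-difference identity \eqref{eq:Phi_local_diff} to rewrite the numerator and denominator of \eqref{eq:fast_logit} in terms of the global potential. Concretely, for any fixed $x_{-i}$ and $b\in\{0,1\}$, $m_i(b;x)-m_i(0;x)=\Phi(b,x_{-i})-\Phi(0,x_{-i})$, so multiplying numerator and denominator of \eqref{eq:fast_logit} by $\exp\!\bigl(-\beta\kappa\,m_i(0;x)+\beta\kappa\Phi(0,x_{-i})\bigr)$ gives
\begin{equation}
\mathbb{P}(x_i^+=b\mid x)=\frac{\exp\!\bigl(\beta\kappa\,\Phi(b,x_{-i})\bigr)}{\sum_{b'\in\{0,1\}}\exp\!\bigl(\beta\kappa\,\Phi(b',x_{-i})\bigr)}.
\label{eq:gibbs_step}
\end{equation}
Incorporating the uniform-random site selection, the full kernel for an adjacent flip from $x$ to $x^{(i,b)}$ is $P_\beta^{\mathrm{F}}(x,x^{(i,b)})=\tfrac{1}{n}\,\exp\!\bigl(\beta\kappa\Phi(x^{(i,b)})\bigr)/Z_i(x_{-i})$, where $Z_i(x_{-i})\triangleq\sum_{b'}\exp\!\bigl(\beta\kappa\Phi(b',x_{-i})\bigr)$.

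Next I would check detailed balance for an adjacent pair $(x,x^{(i,b)})$ with $x_i\neq b$. The essential observation is that $Z_i$ depends only on $x_{-i}$, and $x$ and $x^{(i,b)}$ share the same off-$i$ coordinates, so $Z_i(x_{-i})=Z_i(x^{(i,b)}_{-i})$. Then
\begin{equation}
\pi_\beta^{\mathrm{F}}(x)\,P_\beta^{\mathrm{F}}(x,x^{(i,b)})=\frac{1}{nZ_\beta^{\mathrm{F}}Z_i(x_{-i})}\exp\!\bigl(\beta\kappa[\Phi(x)+\Phi(x^{(i,b)})]\bigr),
\end{equation}
which is symmetric in $x\leftrightarrow x^{(i,b)}$, so $\pi_\beta^{\mathrm{F}}(x)P_\beta^{\mathrm{F}}(x,x^{(i,b)})=\pi_\beta^{\mathrm{F}}(x^{(i,b)})P_\beta^{\mathrm{F}}(x^{(i,b)},x)$. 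For non-adjacent pairs both sides vanish, and the self-loop identity is trivial, so detailed balance holds globally. Standard theory then gives that $\pi_\beta^{\mathrm{F}}$ is stationary and the chain is reversible.

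I do not expect a serious obstacle here; the only subtle point is making the symmetry of $v_{ij}$ explicit, which is what guarantees that $\Phi$ in \eqref{eq:Phi_def} is a well-defined function of unordered pairs and hence that the single-site difference $m_i(b;x)-m_i(b';x)$ computed from $i$'s incident edges coincides with the global potential difference $\Phi(b,x_{-i})-\Phi(b',x_{-i})$. If desired, I would close by noting irreducibility and aperiodicity (same argument as in Theorem~\ref{thm:snap_stationary}, since \eqref{eq:gibbs_step} is strictly positive for every $b$) to conclude that $\pi_\beta^{\mathrm{F}}$ is the \emph{unique} stationary distribution.
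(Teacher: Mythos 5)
Your proof is correct and follows essentially the same route as the paper's: both hinge on the local-difference identity \eqref{eq:Phi_local_diff} to identify the fast update \eqref{eq:fast_logit} with the single-site conditional of the Gibbs measure \eqref{eq:pi_fast}. The only difference is that you spell out the detailed-balance verification (and the uniqueness remark) that the paper leaves implicit as standard Glauber-dynamics theory.
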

\begin{proof}
Fix $i$ and $x_{-i}$. By \eqref{eq:Phi_local_diff},
\begin{equation}
\Phi(b,x_{-i})
=
\Phi(0,x_{-i}) + m_i(b;x)-m_i(0;x).
\label{eq:Phi_cond_app}
\end{equation}
Thus, for the Gibbs distribution \eqref{eq:pi_fast},
\begin{equation}
\pi_\beta^{\mathrm{F}}(x_i=b\mid x_{-i})
\propto
\exp\!\left(\beta\kappa\,m_i(b;x)\right).
\label{eq:gibbs_cond_app}
\end{equation}
Normalizing \eqref{eq:gibbs_cond_app} over $b\in\{0,1\}$ yields the update rule \eqref{eq:fast_logit}.
\end{proof}
\begin{remark}
The channel impacts the fast stationary distribution only through the product $\beta\kappa$.
Thus, communication unreliability acts as an effective temperature increase for fixed $\beta$.
\end{remark}

\section{Finite-$K$ Link Usage: Bridging Snapshot and Fast}
\label{sec:finiteK}

We interpolate between snapshot updates and the fast regime by allowing a finite number
$K\in\mathbb{N}$ of channel uses per neighbor per update.

\subsection{Finite-$K$ empirical payoff}

For each update of agent $i$ and neighbor $j\in N(i)$, assume $K$ i.i.d.\ channel uses:
\begin{equation}
Y_{ij}^{(1)},\dots,Y_{ij}^{(K)}
\ \text{are i.i.d.\ given $x_j$.}
\label{eq:Ksamples_iid}
\end{equation}

For each $b\in\{0,1\}$ define the empirical match frequency
\begin{equation}
\widehat q_{ij}^{(K)}(b)
\triangleq
\frac{1}{K}\sum_{k=1}^K \ind\{Y_{ij}^{(k)}=b\}.
\label{eq:qhat_def}
\end{equation}

Agent $i$ uses the empirical payoff estimate
\begin{equation}
\widehat u_{i}^{(K)}(b)
\triangleq
\sum_{j\in N(i)} v_{ij}\,\widehat q_{ij}^{(K)}(b),
\qquad b\in\{0,1\}.
\label{eq:uhatK_def}
\end{equation}

Define the empirical advantage
\begin{equation}
\widehat\Delta_i^{(K)}
\triangleq
\widehat u_i^{(K)}(1)-\widehat u_i^{(K)}(0).
\label{eq:DeltaHatK}
\end{equation}

\subsection{Finite-$K$ logit update and kernel}

Conditioned on all samples
$Y_i^{1:K}\triangleq \{Y_{ij}^{(k)}: j\in N(i),\,k\le K\}$,
agent $i$ updates via
\begin{equation}
\mathbb{P}(x_i^+=1\mid Y_i^{1:K})
=
\sigmoid\!\left(\beta\,\widehat\Delta_i^{(K)}\right).
\label{eq:logit_finiteK}
\end{equation}

Averaging over the channel and the random node selection yields the induced kernel $P_{\beta,K}$.
For $x_i=0$,
\begin{equation}
P_{\beta,K}\!\left(x,x^{(i,1)}\right)
=
\frac{1}{n}\,
\mathbb{E}\!\left[
\sigmoid\!\left(\beta\,\widehat\Delta_i^{(K)}\right)
\,\middle|\,
x
\right].
\label{eq:P_betaK_forward}
\end{equation}
For $x_i=1$,
\begin{equation}
P_{\beta,K}\!\left(x,x^{(i,0)}\right)
=
\frac{1}{n}\,
\mathbb{E}\!\left[
\sigmoid\!\left(-\beta\,\widehat\Delta_i^{(K)}\right)
\,\middle|\,
x
\right].
\label{eq:P_betaK_backward}
\end{equation}

The following lemma states that $K=1$ scenario here recovers the snapshot communication regime. 
\begin{lemma}
\label{prop:K1_snapshot}
For $K=1$, the kernel $P_{\beta,1}$ coincides with the snapshot kernel
$P_\beta^{\mathrm{S}}$ in \eqref{eq:P_snap_forward}--\eqref{eq:P_snap_backward}.
\end{lemma}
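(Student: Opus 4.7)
The plan is to show this by unwinding definitions and identifying the two stochastic mechanisms on a common sample space. I would begin by specializing the empirical match frequency in \eqref{eq:qhat_def} to $K=1$, which gives $\widehat q_{ij}^{(1)}(b)=\ind\{Y_{ij}^{(1)}=b\}$ for each neighbor $j\in N(i)$ and each $b\in\{0,1\}$. Substituting this into \eqref{eq:uhatK_def} yields $\widehat u_i^{(1)}(b)=\sum_{j\in N(i)} v_{ij}\,\ind\{Y_{ij}^{(1)}=b\}$, which has exactly the functional form of the snapshot payoff estimate \eqref{eq:uhat_snap_def} upon the natural identification $y_{ij}\equiv Y_{ij}^{(1)}$. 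Since in both models the single symbol is drawn from the same memoryless channel (BSC or BEC) conditional on $x_j$ and is independent across edges and update times, this identification is a coupling of the two sample spaces under which the two payoff estimates are pointwise equal.

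Next I would deduce equality of the advantages: from \eqref{eq:DeltaHatK} and \eqref{eq:Delta_snap_def}, the identity $\widehat u_i^{(1)}(b)=\widehat u_i^{\mathrm{S}}(b\mid y_i)$ gives $\widehat\Delta_i^{(1)}=\widehat\Delta_i^{\mathrm{S}}(y_i)$. Consequently the conditional logit updates \eqref{eq:logit_finiteK} and \eqref{eq:snapshot_logit_1}--\eqref{eq:snapshot_logit_0} are the same function of the realized channel output.

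Finally I would take expectation over the channel and average over the uniform node selection in both constructions. Because the integrands agree pointwise and the channel laws on $y_i$ and on $Y_i^{1:1}$ are identical, the forward kernel entries from \eqref{eq:P_betaK_forward} reduce to those in \eqref{eq:P_snap_forward}, and similarly \eqref{eq:P_betaK_backward} matches \eqref{eq:P_snap_backward}. Self-loop entries then agree by row-stochasticity, so $P_{\beta,1}=P_\beta^{\mathrm{S}}$ entrywise.

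There is really no hard step here: the argument is purely a definitional identification, and the only care needed is to make explicit the canonical coupling $y_{ij}\equiv Y_{ij}^{(1)}$ so that the two expectations are taken with respect to the same channel law. The proof would therefore be short, serving mainly to confirm that the finite-$K$ framework in Section~\ref{sec:finiteK} genuinely subsumes the snapshot model of Section~\ref{subsec:snapshot} as its $K=1$ endpoint, with the fast regime recovered in the $K\to\infty$ limit treated later.
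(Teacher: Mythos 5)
Your proposal is correct and follows essentially the same route as the paper's proof: specialize \eqref{eq:qhat_def} to $K=1$, identify $\widehat u_i^{(1)}(b)$ with the snapshot payoff estimate under $y_{ij}\equiv Y_{ij}^{(1)}$, and average over the identical one-shot channel law. The extra care you take in making the coupling explicit and noting that self-loop entries agree by row-stochasticity is a harmless elaboration of the same definitional argument.
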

\begin{proof}
If $K=1$ then $$\widehat q_{ij}^{(1)}(b)=\mathbf{1}\{Y_{ij}^{(1)}=b\}$$ and hence
$$\widehat u_i^{(1)}(b)=\sum_{j\in N(i)}v_{ij}\mathbf{1}\{Y_{ij}^{(1)}=b\}=m_i(b;Y_i).$$
Substituting into \eqref{eq:logit_finiteK} yields the snapshot  update.
Averaging over the one-shot channel law gives the stated kernels.
\end{proof}
Define the advantage in the fast communication regime
\begin{equation}
\Delta_i^{\mathrm{F}}(x)
\triangleq
\kappa\,\Delta\Phi_i(x),
\label{eq:Delta_fast_def}
\end{equation}
with $\kappa$ from \eqref{eq:kappa_def}.
The corresponding kernel $P_{\beta,\infty}$  of the fast regime is
\begin{equation}
P_{\beta,\infty}\!\left(x,x^{(i,1)}\right)
=
\frac{1}{n}\,
\sigmoid\!\left(\beta\,\Delta_i^{\mathrm{F}}(x)\right).
\label{eq:P_fast_forward}
\end{equation}
\begin{equation}
P_{\beta,\infty}\!\left(x,x^{(i,0)}\right)
=
\frac{1}{n}\,
\sigmoid\!\left(-\beta\,\Delta_i^{\mathrm{F}}(x)\right).
\label{eq:P_fast_backward}
\end{equation}
In the following we show entrywise kernel convergence $P_{\beta,K}\to P_{\beta,\infty}$. 
\begin{theorem}
\label{thm:P_converges}
Fix $\beta>0$ and either channel model. For every $x\in\{0,1\}^n$ and every $i\in V$,
\begin{equation}
\lim_{K\to\infty} P_{\beta,K}\!\left(x,x^{(i,1)}\right)
=
P_{\beta,\infty}\!\left(x,x^{(i,1)}\right).
\label{eq:entrywise_conv_forward}
\end{equation}
\begin{equation}
\lim_{K\to\infty} P_{\beta,K}\!\left(x,x^{(i,0)}\right)
=
P_{\beta,\infty}\!\left(x,x^{(i,0)}\right).
\label{eq:entrywise_conv_backward}
\end{equation}
\end{theorem}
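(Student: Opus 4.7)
The plan is to reduce \eqref{eq:entrywise_conv_forward}--\eqref{eq:entrywise_conv_backward} to pointwise convergence of the conditional expectation $\mathbb{E}[\sigmoid(\beta\widehat\Delta_i^{(K)})\mid x]$, since the factor $1/n$ in \eqref{eq:P_betaK_forward}--\eqref{eq:P_fast_backward} is deterministic and the backward entry is obtained from the forward one through the identity $1-\sigmoid(t)=\sigmoid(-t)$. I would therefore fix $x\in\{0,1\}^n$ and $i\in V$ and focus on
\[
\mathbb{E}\!\left[\sigmoid(\beta\widehat\Delta_i^{(K)})\,\middle|\,x\right]\ \xrightarrow[K\to\infty]{}\ \sigmoid(\beta\,\Delta_i^{\mathrm{F}}(x)),
\]
handling the BSC and BEC cases simultaneously via the common affine relation \eqref{eq:channel_affine}.

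The first step would be to establish an almost sure limit for the empirical advantage $\widehat\Delta_i^{(K)}$. For each $j\in N(i)$ and $b\in\{0,1\}$, the random variables $\ind\{Y_{ij}^{(k)}=b\}$, $k=1,\dots,K$, are, by \eqref{eq:Ksamples_iid}, i.i.d.\ Bernoulli with mean $\mathbb{E}[\ind\{y_{ij}=b\}\mid x_j]$ conditional on $x$. The strong law of large numbers then yields $\widehat q_{ij}^{(K)}(b)\to \mathbb{E}[\ind\{y_{ij}=b\}\mid x_j]$ almost surely. Independence across the finitely many edges incident to $i$ lets me combine these limits on a single full-measure event, and summing the convergent sequences with weights $v_{ij}$ in \eqref{eq:uhatK_def} while invoking \eqref{eq:channel_affine} gives
\[
\widehat u_i^{(K)}(b)\ \xrightarrow[K\to\infty]{\mathrm{a.s.}}\ c\!\sum_{j\in N(i)} v_{ij} + \kappa\,m_i(b;x).
\]
Differencing over $b\in\{0,1\}$ cancels the $b$-independent additive term and produces $\widehat\Delta_i^{(K)}\to \kappa\,\Delta\Phi_i(x)=\Delta_i^{\mathrm{F}}(x)$ almost surely conditional on $x$.

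To finish, I would invoke continuity together with a bounded-convergence argument. Continuity of $\sigmoid$ gives $\sigmoid(\beta\widehat\Delta_i^{(K)})\to \sigmoid(\beta\,\Delta_i^{\mathrm{F}}(x))$ almost surely, and because $\sigmoid(\cdot)\in(0,1)$ is uniformly bounded, the bounded convergence theorem transfers this almost sure limit to conditional expectations, which is exactly \eqref{eq:entrywise_conv_forward}. Applying the same reasoning to $\sigmoid(-\beta\widehat\Delta_i^{(K)})$, or using $1-\sigmoid=\sigmoid(-\cdot)$ together with linearity of expectation, then yields \eqref{eq:entrywise_conv_backward}.

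I do not expect a substantive obstacle: the argument is essentially a strong law plus continuity, with the only unifying bookkeeping being the common affine representation \eqref{eq:channel_affine} of BSC and BEC. If desired, the qualitative limit could be sharpened to an $O(K^{-1/2})$ rate uniformly in $x$ by replacing the SLLN step with Hoeffding's inequality on each $\widehat q_{ij}^{(K)}(b)$ and exploiting the Lipschitz property of $\sigmoid$; this quantitative version would also immediately give convergence of stationary distributions on the finite state space, as promised in the contributions.
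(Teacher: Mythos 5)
Your proposal is correct and follows essentially the same route as the paper's proof: the strong law of large numbers applied to the empirical match frequencies $\widehat q_{ij}^{(K)}(b)$, followed by continuity of $\sigmoid$ and a dominated (bounded) convergence argument to pass the almost sure limit through the conditional expectation. The additional remarks on combining the finitely many a.s.\ events, on using \eqref{eq:channel_affine} to unify BSC and BEC, and on the optional Hoeffding-based $O(K^{-1/2})$ rate are consistent refinements but do not change the argument.
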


\begin{proof}
Fix $x$ and $i$. For each neighbor $j$ and action $b$, the indicators $\ind\{Y_{ij}^{(k)}=b\}$ are i.i.d.\ in $k$ and bounded.
By the strong law of large numbers,
\begin{equation}
\widehat q_{ij}^{(K)}(b)
\xrightarrow[K\to\infty]{\text{a.s.}}
\mathbb{E}\!\left[\ind\{Y_{ij}^{(1)}=b\}\,\middle|\,x_j\right].
\label{eq:qhat_slln_app}
\end{equation}
Summing in \eqref{eq:uhatK_def} yields $\widehat u_i^{(K)}(b)\to \widehat u_i^{\mathrm{F}}(b\mid x)$ a.s.,
and hence $\widehat\Delta_i^{(K)}\to \Delta_i^{\mathrm{F}}(x)$ a.s.\ using \eqref{eq:Delta_fast_def}.
Since $\sigmoid(\cdot)$ is bounded and continuous, dominated convergence gives
\begin{equation}
\mathbb{E}\!\left[\sigmoid\!\left(\beta\,\widehat\Delta_i^{(K)}\right)\,\middle|\,x\right]
\to
\sigmoid\!\left(\beta\,\Delta_i^{\mathrm{F}}(x)\right).
\label{eq:domconv_app}
\end{equation}
Substituting \eqref{eq:domconv_app} into \eqref{eq:P_betaK_forward} yields \eqref{eq:entrywise_conv_forward}.
The backward case \eqref{eq:entrywise_conv_backward} follows analogously.
\end{proof}

In the following we present our main result of this section: the process is ergodic for all $K$ and as $K\to\infty$ stationary distributions converge to that of the fast regime.

\begin{theorem}
\label{thm:pi_converges}
Fix $\beta>0$ and either channel model. For every $K\in\mathbb{N}$, the Markov chain with transition matrix $P_{\beta,K}$ is irreducible and aperiodic, hence it admits a unique stationary distribution $\pi_{\beta,K}$.
For every $x\in\{0,1\}^n$,
\begin{equation}
\lim_{K\to\infty}\pi_{\beta,K}(x)
=
\pi_\beta^{\mathrm{F}}(x),
\label{eq:pi_entrywise_conv}
\end{equation}
where $\pi_\beta^{\mathrm{F}}$ is given in \eqref{eq:pi_fast}.
\end{theorem}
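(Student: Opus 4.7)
The statement decomposes into two parts: ergodicity of $P_{\beta,K}$ for each finite $K$, and the limit $\pi_{\beta,K}\to\pi_\beta^{\mathrm{F}}$. My plan is to handle the first by replicating the argument used in Theorem~\ref{thm:snap_stationary}, and the second by a standard compactness/subsequence argument built on the entrywise kernel convergence established in Theorem~\ref{thm:P_converges}. No serious probabilistic obstacle is expected, because the finite state space $\{0,1\}^n$ trivializes all tightness and weak-convergence issues.

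For ergodicity, I would note that for any realization of the $K$ channel uses and any $\beta>0$, the logit probability $\sigmoid(\beta\,\widehat\Delta_i^{(K)})$ is finite and hence lies strictly in $(0,1)$; averaging over the samples preserves this, so every conditional single-site flip probability under $P_{\beta,K}$ is strictly positive. Chaining at most $n$ such single-site flips connects any two configurations (irreducibility), and the complementary choice $x_i^+=x_i$ gives $P_{\beta,K}(x,x)>0$ (aperiodicity). A finite irreducible aperiodic chain admits a unique stationary distribution $\pi_{\beta,K}$.

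For the convergence of stationary distributions, the probability simplex on $\{0,1\}^n$ is compact, so $\{\pi_{\beta,K}\}_{K\ge 1}$ has at least one accumulation point. Fix any convergent subsequence $\pi_{\beta,K_\ell}\to\pi^\star$. Theorem~\ref{thm:P_converges} yields entrywise convergence $P_{\beta,K_\ell}\to P_{\beta,\infty}$, so continuity of matrix--vector multiplication on a finite space lets me pass to the limit in the identity $\pi_{\beta,K_\ell}P_{\beta,K_\ell}=\pi_{\beta,K_\ell}$, producing $\pi^\star P_{\beta,\infty}=\pi^\star$. Theorem~\ref{thm:fast_gibbs} exhibits $\pi_\beta^{\mathrm{F}}$ as a stationary distribution of $P_{\beta,\infty}$, and the same strictly-positive-sigmoid argument used above shows $P_{\beta,\infty}$ is itself irreducible and aperiodic, hence uniquely ergodic. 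Therefore $\pi^\star=\pi_\beta^{\mathrm{F}}$. Since every accumulation point of $\{\pi_{\beta,K}\}$ coincides with $\pi_\beta^{\mathrm{F}}$, the full sequence converges to $\pi_\beta^{\mathrm{F}}$, which is \eqref{eq:pi_entrywise_conv}. The only delicate point along the way is verifying unique ergodicity of the limit kernel $P_{\beta,\infty}$, which reduces to the same positivity observation.
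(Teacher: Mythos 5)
Your proposal is correct, and the first half (ergodicity of $P_{\beta,K}$ for each finite $K$) is essentially identical to the paper's argument: strict positivity of the sigmoid for any sample realization, preserved under averaging, gives positive single-site flip probabilities, hence irreducibility by chaining coordinate changes and aperiodicity via the self-loop. For the convergence $\pi_{\beta,K}\to\pi_\beta^{\mathrm{F}}$, however, you take a genuinely different route. The paper invokes the Markov chain tree theorem, writing $\pi_{\beta,K}(x)$ as a ratio of tree weights $\tau_{\beta,K}(x)/\sum_z\tau_{\beta,K}(z)$, each a finite sum of finite products of kernel entries, and then passes to the limit using the entrywise convergence of Theorem~\ref{thm:P_converges}; this makes the stationary distribution an explicitly continuous function of the kernel and avoids any subsequence extraction. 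You instead use compactness of the simplex, extract a convergent subsequence, pass to the limit in $\pi_{\beta,K_\ell}P_{\beta,K_\ell}=\pi_{\beta,K_\ell}$, and identify every accumulation point with $\pi_\beta^{\mathrm{F}}$ via unique ergodicity of $P_{\beta,\infty}$. Your approach is more elementary in that it needs no tree-theorem machinery, but it shifts the burden onto verifying that the limit kernel is itself irreducible and aperiodic --- a point you correctly flag and resolve by the same positivity observation (and which the paper's tree-formula argument also implicitly needs, since the limiting denominator must be nonzero). Both proofs are complete and correct; yours is a standard alternative with no gap.
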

\begin{proof}
For any realization of the samples $Y_i^{1:K}$, \eqref{eq:logit_finiteK} assigns strictly positive probability
to both actions $b\in\{0,1\}$. Averaging over $Y_i^{1:K}$ preserves strict positivity of each single-site update.
Thus, from any state $x$ and any coordinate $i$, both moves $x\mapsto x^{(i,0)}$ and $x\mapsto x^{(i,1)}$
occur with positive probability (after accounting for the $1/n$ selection factor).
This implies irreducibility (reach any target by single-site changes) and aperiodicity
(each state has a positive self-loop probability). Hence, the Markov chain with transition matrix $P_{\beta,K}$ admits a stationary distribution $\pi_{\beta,K}$.

By the Markov chain tree theorem, for each $K$ and each $x$, we have
\begin{equation}
\pi_{\beta,K}(x)
=
\frac{\tau_{\beta,K}(x)}{\sum_{z}\tau_{\beta,K}(z)},\end{equation}
where \begin{equation}
\tau_{\beta,K}(x)
\triangleq
\sum_{T\in\mathcal{T}_x}\ \prod_{(z\to z')\in T} P_{\beta,K}(z,z'),
\label{eq:tree_piK}
\end{equation}
and $\mathcal{T}_x$ is the set of directed spanning in-trees rooted at $x$
in the transition graph.
For fixed $x$, $\tau_{\beta,K}(x)$ is a finite sum of finite products of entries of $P_{\beta,K}$,
hence it is a continuous function of those entries.
By Theorem~\ref{thm:P_converges}, $P_{\beta,K}\to P_{\beta,\infty}$ entrywise; therefore
$\tau_{\beta,K}(x)\to\tau_{\beta,\infty}(x)$ for every $x$, and similarly for the denominator.
Taking the ratio yields \eqref{eq:pi_entrywise_conv}.
\end{proof}

\begin{corollary}
\label{cor:objective_conv}
Let $f:\{0,1\}^n\to\mathbb{R}$ be any measurable function. Then
$$\lim_{K\to\infty}\mathbb{E}_{\pi_{\beta,K}}[f(X)]
=
\mathbb{E}_{\pi_{\beta,\infty}}[f(X)].$$
In particular, $$\lim_{K\to\infty}\mathbb{E}_{\pi_{\beta,K}}[\Phi(X)]
=
\mathbb{E}_{\pi_{\beta,\infty}}[\Phi(X)].$$
\end{corollary}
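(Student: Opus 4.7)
The plan is to reduce the statement to a finite linear combination and exploit entrywise convergence of the stationary distributions already established in Theorem~\ref{thm:pi_converges}. Because the state space $\{0,1\}^n$ is finite, every function $f:\{0,1\}^n\to\mathbb{R}$ is automatically bounded, and the expectation
\[
\mathbb{E}_{\pi_{\beta,K}}[f(X)]=\sum_{x\in\{0,1\}^n} f(x)\,\pi_{\beta,K}(x)
\]
is a finite sum with $2^n$ terms. This sidesteps any measure-theoretic subtleties: no uniform integrability, no tightness, and no dominated convergence machinery are needed.

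First I would invoke Theorem~\ref{thm:pi_converges} to obtain $\pi_{\beta,K}(x)\to\pi_\beta^{\mathrm{F}}(x)$ for each fixed $x$. Then, since the sum above is indexed over a finite set independent of $K$, I would interchange the limit and the sum by linearity of limits on finitely many sequences, yielding
\[
\lim_{K\to\infty}\mathbb{E}_{\pi_{\beta,K}}[f(X)]
=\sum_{x\in\{0,1\}^n} f(x)\,\pi_\beta^{\mathrm{F}}(x)
=\mathbb{E}_{\pi_\beta^{\mathrm{F}}}[f(X)].
\]
The particular claim for $f=\Phi$ then follows by specializing, since $\Phi$ is a fixed real-valued function on the finite set $\{0,1\}^n$.

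There is essentially no hard step here: the main content has already been discharged by the Markov chain tree theorem argument in Theorem~\ref{thm:pi_converges}. The only thing worth flagging for the reader is the role of finiteness of $\{0,1\}^n$, which makes the interchange of limit and sum trivial and removes the need to assume integrability of $f$ beyond measurability. If one wished to state the corollary in a setting where the state space grew with $n$, one would need an additional uniform bound on $f$ or some Lipschitz-type control in $K$, but in the present finite-$n$ regime no such refinement is necessary.
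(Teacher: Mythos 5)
Your proof is correct and matches the paper's own argument: both invoke the pointwise convergence $\pi_{\beta,K}(x)\to\pi_{\beta,\infty}(x)$ from Theorem~\ref{thm:pi_converges} and pass the limit through the finite sum $\sum_x f(x)\pi_{\beta,K}(x)$ over $\{0,1\}^n$. Your remark that finiteness of the state space removes any integrability or dominated-convergence concerns is a fair clarification but adds nothing beyond the paper's one-line proof.
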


\begin{proof}
The state space is finite and $\pi_{\beta,K}(x)\to\pi_{\beta,\infty}(x)$ pointwise by
Theorem~\ref{thm:pi_converges}; thus the convergence of expectations follows by term-by-term convergence
of the finite sum $\sum_x f(x)\pi_{\beta,K}(x)$.
\end{proof}

\begin{remark}
Each update of agent $i$ uses $K|N(i)|$ channel uses.
Proposition~\ref{prop:K1_snapshot} and Theorems~\ref{thm:P_converges}--\ref{thm:pi_converges} formalize that increasing $K$ interpolates between snapshot behavior ($K=1$) and the fast Gibbs sampler ($K\to\infty$).
\end{remark}

\section{Communication-Theoretic Interpretation}

This section provides a communication-theoretic interpretation of the attenuation
coefficients derived in Sections~III and~IV and of the finite-$K$ interpolation between
snapshot and fast communication regimes.
We show that the learning dynamics can be interpreted through classical concepts such as
repetition coding, decoding, and estimation over noisy channels.

\subsection{Repetition Coding over a Binary Symmetric Channel}

Consider a directed link $j\to i$ modeled as a binary symmetric channel (BSC) with crossover
probability $p\in[0,\tfrac12)$.
Suppose agent $j$ transmits its action $x_j\in\{0,1\}$ to agent $i$ using $K$ independent
channel uses.

\subsubsection{Decoding-Based Aggregation}

If agent $i$ performs decoding based on the $K$ received symbols and then applies the
learning rule to the decoded action, the natural decoder under a symmetric prior is
majority vote (maximum-likelihood decoding).
Let $\hat x_j$ denote the decoded bit.
The effective error probability after decoding is
\[
p_K
=
\sum_{m=\lceil(K+1)/2\rceil}^{K}
\binom{K}{m} p^m (1-p)^{K-m}.
\]
Thus repetition coding converts the original BSC$(p)$ into an effective BSC$(p_K)$.

Substituting $p_K$ into the attenuation factor of the fast regime yields
\[
\kappa_K = 1 - 2p_K.
\]
As $K\to\infty$, $p_K\to 0$ and hence $\kappa_K\to 1$, recovering the noiseless coordination
game.
This interpretation explains how sufficient retransmissions eliminate channel noise at the
cost of increased communication.

\subsubsection{Estimation-Based Aggregation}

Alternatively, agent $i$ may avoid hard decoding and instead form an estimate of the
\emph{expected match payoff} directly from the received samples.
Specifically, let
\[
\hat q_{ij}^{(K)}(b)
=
\frac{1}{K}\sum_{k=1}^{K}\mathbf{1}\{Y_{ij}^{(k)}=b\},
\]
and construct the estimated payoff
\[
\hat u_i^{(K)}(b)
=
\sum_{j\in N(i)} v_{ij}\,\hat q_{ij}^{(K)}(b).
\]
This estimator converges almost surely to the channel-averaged payoff as $K\to\infty$.
In contrast to decoding-based aggregation, the mean of this estimator is \emph{exactly}
attenuated by $(1-2p)$ for all $K$, while its variance decreases with $K$.

This distinction explains the behavior observed in Section~IV:
finite-$K$ learning interpolates smoothly between snapshot communication ($K=1$) and the
fast regime ($K\to\infty$), even though the attenuation coefficient itself does not change
with $K$ under estimation-based aggregation.

\subsection{Binary Erasure Channel Interpretation}

The same interpretation applies to a binary erasure channel (BEC) with erasure probability
$\epsilon\in[0,1)$.
Each transmission reveals the correct bit with probability $1-\epsilon$ and provides no
information otherwise.

Under estimation-based aggregation, the expected match payoff is attenuated by a factor
$(1-\epsilon)$, yielding the effective coefficient
\[
\kappa = 1 - \epsilon.
\]
As in the BSC case, increasing $K$ reduces variance but does not alter the mean attenuation
unless explicit decoding is performed.
This explains why finite-$K$ communication improves performance monotonically but with
diminishing returns, as observed in the numerical results.

\subsection{Interpretation of the Two Communication Regimes}

The communication-theoretic viewpoint clarifies the qualitative difference between snapshot
and fast learning dynamics.
Snapshot communication corresponds to acting on a \emph{single realization} of the noisy
channel output.
Fast communication corresponds to acting on the \emph{expectation} of the channel output,
either through sufficiently fast signaling or through averaging over many channel uses.

From an information-theoretic perspective, the fast regime suppresses channel noise before
the nonlinear logit update, whereas the snapshot regime applies the nonlinearity prior to
averaging.
This non-commutativity explains both the Gibbs structure of the fast regime and the
non-equilibrium behavior of snapshot dynamics.

\subsection{Implications for Communication-Constrained Learning}

The above interpretation highlights a fundamental tradeoff between communication cost and
learning performance.
Finite-$K$ communication provides a continuous design space between minimal signaling and
full channel averaging.
In practice, moderate values of $K$ may achieve near-fast performance while significantly
reducing communication overhead, as demonstrated in Section~VIII.

More broadly, the results suggest that the manner in which communication noise is processed
is as important as the noise level itself.
Expectation-based aggregation yields equilibrium learning dynamics with predictable
behavior, while realization-based aggregation induces non-equilibrium effects that cannot
be mitigated solely by increasing computational rationality.

\section{Fast Regime as Entropy-Regularized Optimization}
\label{sec:variational}

This section interprets the fast regime (Theorem~\ref{thm:fast_gibbs}) as the optimizer of an entropy-regularized objective over distributions on $\{0,1\}^n$. The results in this section are trivial extensions of known properties of Gibbs distribution, hence we relegate the proofs to appendix. The key observation  here is that under fast communication regime, the noisy learning dynamics is equivalent to noiseless dynamics with effective temperature $\beta_{eff}=\kappa \beta$. 

Let $\Delta$ denote the probability simplex over $\{0,1\}^n$.
For $\mu\in\Delta$, define Shannon entropy
\begin{equation}
H(\mu)
\triangleq
-\sum_{x\in\{0,1\}^n}\mu(x)\log \mu(x),
\label{eq:entropy_def}
\end{equation}
with the convention $0\log 0=0$.

Define the Lagrangian (free-energy) functional
\begin{equation}
\mathcal{J}_{\beta}(\mu)
\triangleq
\kappa\,\mathbb{E}_{\mu}[\Phi(X)]
+
\frac{1}{\beta}H(\mu),
\label{eq:J_def}
\end{equation}
where $\kappa$ is defined in \eqref{eq:kappa_def} and $\Phi$ is defined in \eqref{eq:Phi_def}.
The following theorem states that the Gibbs distribution is the unique maximizer of the Lagrangian above. 
\begin{theorem}
\label{thm:variational_gibbs}
Fix $\beta>0$.
The fast stationary distribution $\pi_\beta^{\mathrm{F}}$ in \eqref{eq:pi_fast} is the unique maximizer
\begin{equation}
\pi_\beta^{\mathrm{F}}
=
\arg\max_{\mu\in\Delta}\ \mathcal{J}_{\beta}(\mu).
\label{eq:argmax_gibbs}
\end{equation}
Moreover,
\begin{equation}
\max_{\mu\in\Delta}\mathcal{J}_{\beta}(\mu)
=
\frac{1}{\beta}\log Z_\beta^{\mathrm{F}},
\label{eq:opt_value_logZ}
\end{equation}
where $Z_\beta^{\mathrm{F}}$ is defined in \eqref{eq:Z_fast}.
\end{theorem}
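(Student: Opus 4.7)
The plan is to reduce the claim to the classical Gibbs variational identity: rewrite $\mathcal{J}_\beta(\mu)$ as an additive constant plus $-\frac{1}{\beta}D_{\mathrm{KL}}(\mu\|\pi)$, where $\pi=\pi_\beta^{\mathrm{F}}$, and then invoke nonnegativity of KL with its equality case.

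First I would use the defining formula \eqref{eq:pi_fast} to write $\kappa\,\Phi(x)=\frac{1}{\beta}\log\pi_\beta^{\mathrm{F}}(x)+\frac{1}{\beta}\log Z_\beta^{\mathrm{F}}$ for every $x\in\{0,1\}^n$. Substituting this identity into the expectation term of \eqref{eq:J_def} and combining the resulting $\frac{1}{\beta}\sum_x \mu(x)\log\pi_\beta^{\mathrm{F}}(x)$ piece with the entropy $H(\mu)$ in \eqref{eq:entropy_def} collapses the functional to
\[
\mathcal{J}_\beta(\mu)=\frac{1}{\beta}\log Z_\beta^{\mathrm{F}}-\frac{1}{\beta}\,D_{\mathrm{KL}}\!\left(\mu\,\|\,\pi_\beta^{\mathrm{F}}\right),
\]
which is the desired representation. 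Because $\pi_\beta^{\mathrm{F}}(x)>0$ for every $x$ on the finite state space, the KL divergence is well defined and finite for every $\mu\in\Delta$, so no ``singular'' $\mu$ causes issues.

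Next, I would invoke Gibbs' inequality: $D_{\mathrm{KL}}(\mu\|\pi_\beta^{\mathrm{F}})\ge 0$ with equality if and only if $\mu=\pi_\beta^{\mathrm{F}}$. This immediately yields both \eqref{eq:argmax_gibbs} (the maximizer is unique and equals $\pi_\beta^{\mathrm{F}}$) and \eqref{eq:opt_value_logZ} (the maximum value is $\frac{1}{\beta}\log Z_\beta^{\mathrm{F}}$). Uniqueness can equivalently be obtained from strict concavity of $H$ on the relative interior of $\Delta$ combined with linearity of $\mu\mapsto\mathbb{E}_\mu[\Phi(X)]$, which makes $\mathcal{J}_\beta$ strictly concave.

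There is no substantive obstacle here, as this is the standard variational principle for Gibbs measures on a finite state space. The only minor care required is the convention $0\log 0=0$ already fixed in \eqref{eq:entropy_def}, which makes the entropy continuous on the closed simplex and lets the argument extend to boundary points of $\Delta$ without modification.
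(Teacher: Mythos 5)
Your proposal is correct and follows essentially the same route as the paper's proof: both rewrite $\mathcal{J}_\beta(\mu)$ as $\frac{1}{\beta}\log Z_\beta^{\mathrm{F}}-\frac{1}{\beta}D(\mu\Vert\pi_\beta^{\mathrm{F}})$ and conclude via nonnegativity of the KL divergence and its equality case. The extra remarks on strict concavity and the $0\log 0$ convention are fine but not needed beyond what the paper already does.
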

The following theorem characterizes the optimization gap:  
\begin{theorem}
\label{thm:gap_bound}
Let $\Phi^\star\triangleq \max_x \Phi(x)$.
Under $\pi_\beta^{\mathrm{F}}$,
\begin{equation}
\Phi^\star - \mathbb{E}_{\pi_\beta^{\mathrm{F}}}[\Phi(X)]
\le
\frac{n\log 2}{\beta\kappa}.
\label{eq:gap_bound_main}
\end{equation}
\end{theorem}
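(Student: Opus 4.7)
The plan is to prove Theorem~\ref{thm:gap_bound} by exploiting the variational characterization established in Theorem~\ref{thm:variational_gibbs}. Since $\pi_\beta^{\mathrm{F}}$ maximizes the free-energy functional $\mathcal{J}_\beta(\mu) = \kappa\,\mathbb{E}_\mu[\Phi(X)] + \tfrac{1}{\beta} H(\mu)$ over the simplex $\Delta$, I can lower-bound $\mathcal{J}_\beta(\pi_\beta^{\mathrm{F}})$ by evaluating $\mathcal{J}_\beta$ at any convenient test distribution and then rearrange to obtain a bound on the optimality gap for $\Phi$.

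First, I would pick a maximizer $x^\star \in \arg\max_x \Phi(x)$ and consider the Dirac distribution $\delta_{x^\star}$, which is in $\Delta$. Since $H(\delta_{x^\star}) = 0$, we have $\mathcal{J}_\beta(\delta_{x^\star}) = \kappa\,\Phi^\star$. By Theorem~\ref{thm:variational_gibbs}, $\mathcal{J}_\beta(\pi_\beta^{\mathrm{F}}) \ge \mathcal{J}_\beta(\delta_{x^\star})$, which yields the inequality
\[
\kappa\,\mathbb{E}_{\pi_\beta^{\mathrm{F}}}[\Phi(X)] + \frac{1}{\beta}\,H(\pi_\beta^{\mathrm{F}}) \;\ge\; \kappa\,\Phi^\star.
\]

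Next, I would rearrange this to isolate the gap:
\[
\kappa\bigl(\Phi^\star - \mathbb{E}_{\pi_\beta^{\mathrm{F}}}[\Phi(X)]\bigr) \;\le\; \frac{1}{\beta}\,H(\pi_\beta^{\mathrm{F}}).
\]
Then I would invoke the elementary entropy upper bound $H(\mu) \le \log|\{0,1\}^n| = n\log 2$, valid for any distribution on a finite set of cardinality $2^n$, to conclude $H(\pi_\beta^{\mathrm{F}}) \le n\log 2$. Dividing through by $\kappa > 0$ gives \eqref{eq:gap_bound_main}.

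There is no real obstacle here: the argument is a one-line consequence of the variational principle plus the worst-case entropy bound. The only mild subtlety worth flagging is ensuring $\kappa > 0$, which holds since we assumed $p \in [0,\tfrac12)$ for the BSC and $\varepsilon \in [0,1)$ for the BEC in Section~\ref{subsec:channel_model}, so division by $\kappa$ is legitimate and the bound correctly blows up as the channel degrades ($\kappa \to 0$), reflecting the fact that unreliable communication inflates the effective temperature $1/(\beta\kappa)$.
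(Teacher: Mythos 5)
Your proposal is correct and follows exactly the paper's own argument: compare $\mathcal{J}_\beta(\pi_\beta^{\mathrm{F}})$ against $\mathcal{J}_\beta(\delta_{x^\star})$ using the variational optimality from Theorem~\ref{thm:variational_gibbs}, rearrange, and bound $H(\pi_\beta^{\mathrm{F}})\le n\log 2$. No differences worth noting.
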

\begin{remark}[Communication reliability as effective temperature]
By \eqref{eq:gap_bound_main}, channel imperfections degrade performance only through the scalar $\kappa$.
For BSC$(p)$, $\kappa=1-2p$; for BEC$(\epsilon)$, $\kappa=1-\epsilon$.
Thus, for a fixed physical temperature $1/\beta$, a less reliable channel reduces $\beta\kappa$
and increases the optimization gap bound proportionally.
\end{remark}

\section{Heterogeneous Link Reliabilities}
\label{sec:hetero}

We extend the model to link-dependent channel qualities, as in wireless networks with heterogeneous SNR.

Let $G=(V,E)$ be an undirected graph. To each undirected edge $\{i,j\}\in E$ we associate an edge-dependent channel parameter and impose direction-independence (symmetry) of the channel quality. More formally, we have the following symmetry assumptions throughout this section.

\begin{assumption}\label{ass:heterogeneous_bsc}
In the heterogeneous binary symmetric channel (BSC) setting, each edge $\{i,j\}\in E$ has a crossover probability
\[
p_{ij}\in\Bigl[0,\tfrac{1}{2}\Bigr), \qquad p_{ij}=p_{ji}.
\]
\end{assumption}

\begin{assumption}\label{ass:heterogeneous_bec}
In the heterogeneous binary erasure channel (BEC) setting, each edge $\{i,j\}\in E$ has an erasure probability
\[
\epsilon_{ij}\in[0,1), \qquad \epsilon_{ij}=\epsilon_{ji}.
\]
\end{assumption}

Define the \emph{edge reliability coefficient} $\kappa_{ij}$ for every $\{i,j\}\in E$ by
\begin{equation}
\kappa_{ij}\triangleq
\begin{cases}
1-2p_{ij}, & \text{(heterogeneous BSC)},\\[2mm]
1-\epsilon_{ij}, & \text{(heterogeneous BEC)}.
\end{cases}
\label{eq:kappa_ij}
\end{equation}
Note that under the assumptions above,  $\kappa_{ij}=\kappa_{ji}$ holds for all $\{i,j\}\in E$.


\begin{figure*}
\centering
\begin{minipage}[b]{0.23\linewidth}
  \centering
     \centerline{\includegraphics[width=4.5cm]{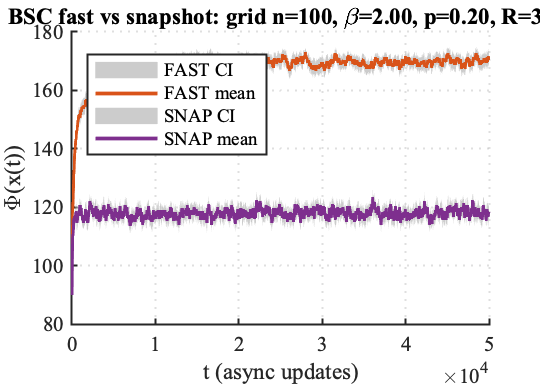}}
  \centerline{\scriptsize (a) $p=0.2$, grid network, $\beta=2$.}
\end{minipage} 
\hfill 
\begin{minipage}[b]{0.23\linewidth}
  \centering
     \centerline{\includegraphics[width=4.5cm]{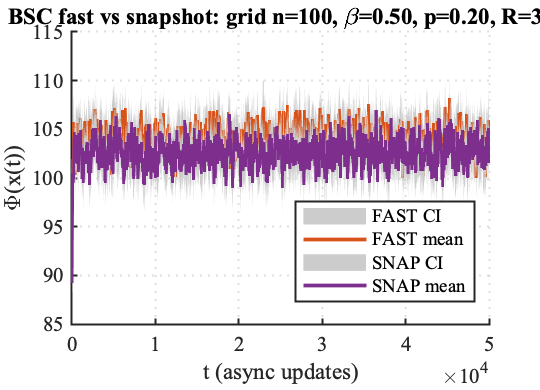}}
  \centerline{ \scriptsize (b) $p=0.2$, grid network, $\beta=0.5$.}
\end{minipage}
\hfill 
\begin{minipage}[b]{0.23\linewidth}
  \centering
     \centerline{\includegraphics[width=4.5cm]{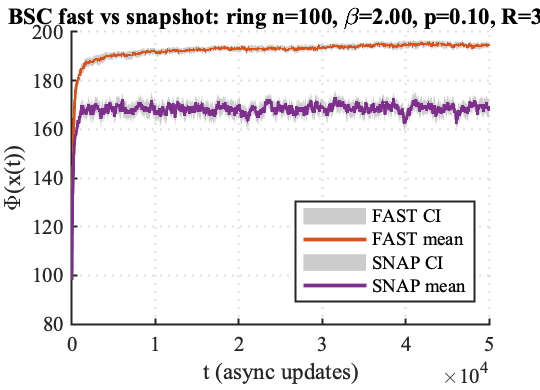}}
  \centerline{ \scriptsize (c) $p=0.1$, ring network, $\beta=2$.}
\end{minipage}
\hfill 
\begin{minipage}[b]{0.23\linewidth}
  \centering
     \centerline{\includegraphics[width=4.5cm]{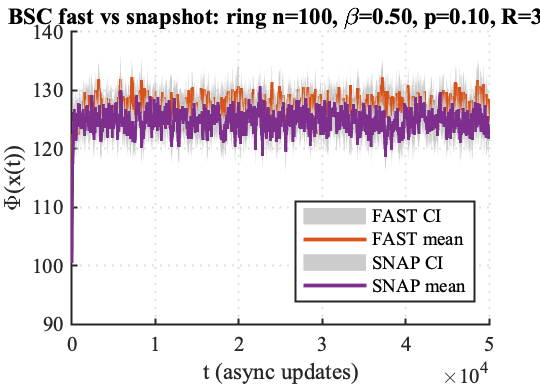}}
  \centerline{ \scriptsize (d) $p=0.1$, ring network, $\beta=0.5$.}
\end{minipage}
\caption{Means and 95\% confidence intervals of the steady-state coordination potential for snapshot and fast communication over BSC links across different network topologies. In the high-temperature regime (
$\beta=0.5$) the two dynamics exhibit comparable performance, as predicted by the first-order approximation. At lower temperatures ($\beta=2$) fast communication significantly outperforms snapshot communication and exhibits reduced variability, owing to channel-averaged updates, whereas snapshot communication displays higher variance due to its dependence on single-shot channel observations.}
\label{fig:quantizers1}
\end{figure*}


Define effective weights
\begin{equation}
w_{ij}
\triangleq
v_{ij}\kappa_{ij}.
\label{eq:w_ij_def}
\end{equation}

Define the effective potential
\begin{equation}
\Phi_{\mathrm{eff}}(x)
\triangleq
\sum_{\{i,j\}\in E} w_{ij}\,\ind\{x_i=x_j\}.
\label{eq:Phi_eff}
\end{equation}
The following theorem states that under the heterogeneous channel model,  fast communication regime dynamics remain to be Gibbs. 
\begin{theorem}
\label{thm:hetero_fast_gibbs}
Assume $v_{ij}=v_{ji}$ and $\kappa_{ij}=\kappa_{ji}$ for all $\{i,j\}\in E$.
Under fast communication, asynchronous logit updates generate a reversible Markov chain with stationary distribution
\begin{equation}
\pi_{\beta}^{\mathrm{F,het}}(x)
=
\frac{1}{Z_{\beta}^{\mathrm{F,het}}}\,
\exp\!\left(\beta\,\Phi_{\mathrm{eff}}(x)\right),
\label{eq:pi_fast_het}
\end{equation}
where
\begin{equation}
Z_{\beta}^{\mathrm{F,het}}
=
\sum_{z\in\{0,1\}^n}\exp\!\left(\beta\,\Phi_{\mathrm{eff}}(z)\right).
\label{eq:Z_fast_het}
\end{equation}
\end{theorem}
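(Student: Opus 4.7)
The plan is to reduce this theorem to Theorem~\ref{thm:fast_gibbs} by identifying the heterogeneous fast update with a single-site Gibbs sampler for $\beta\Phi_{\mathrm{eff}}$, and to verify reversibility exactly as in the homogeneous case.

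First, I would generalize the affine identity \eqref{eq:channel_affine} to the heterogeneous setting. For each edge $\{i,j\}$ there exists a constant $c_{ij}$ (equal to $p_{ij}$ under Assumption~\ref{ass:heterogeneous_bsc} or $0$ under Assumption~\ref{ass:heterogeneous_bec}) such that
\[
\mathbb{E}\!\left[\ind\{y_{ij}=b\}\,\middle|\,x_j\right]=c_{ij}+\kappa_{ij}\ind\{x_j=b\},
\]
with $\kappa_{ij}$ as in \eqref{eq:kappa_ij}. Plugging this into the fast payoff estimate \eqref{eq:uhat_fast_def} gives
\[
\widehat u_i^{\mathrm{F}}(b\mid x)=\sum_{j\in N(i)}v_{ij}c_{ij}+\sum_{j\in N(i)} w_{ij}\ind\{x_j=b\},
\]
with $w_{ij}$ defined by \eqref{eq:w_ij_def}. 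The first sum is independent of $b$ and therefore cancels in the logit ratio.

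Next I would introduce the effective local match score $m_i^{\mathrm{eff}}(b;x)\triangleq \sum_{j\in N(i)}w_{ij}\ind\{x_j=b\}$ and conclude, as in \eqref{eq:fast_logit}, that the heterogeneous fast logit update reduces to
\[
\mathbb{P}(x_i^+=b\mid x)
=\frac{\exp\!\big(\beta\,m_i^{\mathrm{eff}}(b;x)\big)}{\sum_{b'\in\{0,1\}}\exp\!\big(\beta\,m_i^{\mathrm{eff}}(b';x)\big)}.
\]
Because $v_{ij}=v_{ji}$ and $\kappa_{ij}=\kappa_{ji}$ by Assumptions~\ref{ass:heterogeneous_bsc}--\ref{ass:heterogeneous_bec}, the effective weights satisfy $w_{ij}=w_{ji}$, so $\Phi_{\mathrm{eff}}$ in \eqref{eq:Phi_eff} is a well-defined edge sum and a bona fide potential for the utilities $\{u_i^{\mathrm{eff}}(x)\triangleq \sum_{j\in N(i)}w_{ij}\ind\{x_i=x_j\}\}$. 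The single-site potential-difference identity \eqref{eq:Phi_local_diff} then applies to $\Phi_{\mathrm{eff}}$ and yields
\[
\Phi_{\mathrm{eff}}(b,x_{-i})-\Phi_{\mathrm{eff}}(0,x_{-i})
=m_i^{\mathrm{eff}}(b;x)-m_i^{\mathrm{eff}}(0;x).
\]

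Finally, I would repeat the calculation of Theorem~\ref{thm:fast_gibbs} verbatim with $\Phi$ replaced by $\Phi_{\mathrm{eff}}$ and the scalar $\kappa$ absorbed into the weights: the conditional Gibbs step for $\pi_\beta^{\mathrm{F,het}}\propto\exp(\beta\Phi_{\mathrm{eff}})$ coincides with the displayed logit update, so the chain is a single-site Gibbs sampler and hence reversible with stationary distribution \eqref{eq:pi_fast_het}. There is no substantive obstacle; the only point that warrants care is confirming the symmetry $w_{ij}=w_{ji}$, which is precisely what Assumptions~\ref{ass:heterogeneous_bsc}--\ref{ass:heterogeneous_bec} were introduced to guarantee, and which is essential for $\Phi_{\mathrm{eff}}$ to serve as an exact potential.
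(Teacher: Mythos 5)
Your proof is correct and follows exactly the route the paper intends: the paper gives no separate proof for Theorem~\ref{thm:hetero_fast_gibbs}, relying implicitly on repeating the argument of Theorem~\ref{thm:fast_gibbs} with the scalar $\kappa$ replaced by the symmetric effective weights $w_{ij}=v_{ij}\kappa_{ij}$, which is precisely what you do. Your explicit verification that the $b$-independent term $\sum_{j}v_{ij}c_{ij}$ cancels and that $w_{ij}=w_{ji}$ makes $\Phi_{\mathrm{eff}}$ an exact potential is a faithful (and slightly more complete) rendering of the paper's argument.
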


We next focus on the high temperature approximation of the snaphot dynamics. 

Define the effective local match count
\begin{equation}
m_i^{\mathrm{eff}}(b;x)
\triangleq
\sum_{j\in N(i)} v_{ij}\kappa_{ij}\,\ind\{x_j=b\},
\qquad b\in\{0,1\}.
\label{eq:mi_eff}
\end{equation}
The following theorem states that under the heterogeneous channel model,  snapshot communication regime dynamics can be approximated (in the first order) as Gibbs at high temperature. 
\begin{theorem}
\label{thm:small_beta_het}
Fix $x\in\{0,1\}^n$ and $i\in V$.
As $\beta\to 0$, for $x_i=0$,
\begin{equation}
P_{\beta}^{\mathrm{S,het}}\!\left(x,x^{(i,1)}\right)
=
\frac{1}{2n}
+
\frac{\beta}{4n}
\Big(m_i^{\mathrm{eff}}(1;x)-m_i^{\mathrm{eff}}(0;x)\Big)
+
o(\beta).
\label{eq:small_beta_het_forward}
\end{equation}
\end{theorem}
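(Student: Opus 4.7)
The plan is to mirror the argument used for the homogeneous case (Lemma \ref{lem:mean_snap_adv} and Theorem \ref{thm:2}), but allowing the channel attenuation to depend on the edge. The snapshot payoff estimator $\widehat u_i^{\mathrm{S}}(b\mid y_i)=\sum_{j\in N(i)} v_{ij}\,\ind\{y_{ij}=b\}$ does not change under the heterogeneous model, since agents still treat each received symbol as the true neighbor action; only the distribution of the $y_{ij}$'s changes, because each link now uses its own BSC/BEC with parameter encoded by $\kappa_{ij}$.

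First, I would condition on $x$ and take the expectation of the snapshot advantage $\widehat\Delta_i^{\mathrm{S}}(y_i)=\sum_{j\in N(i)} v_{ij}\bigl(\ind\{y_{ij}=1\}-\ind\{y_{ij}=0\}\bigr)$. Independence of the channels across edges lets me push the expectation inside the sum, and the per-edge computation in \eqref{eq:inner_mean_app} carries through verbatim with $\kappa$ replaced by $\kappa_{ij}$. This yields
\begin{equation}
\mathbb{E}\!\left[\widehat\Delta_i^{\mathrm{S}}(y_i)\,\middle|\,x\right]
=
\sum_{j\in N(i)} v_{ij}\kappa_{ij}\bigl(\ind\{x_j=1\}-\ind\{x_j=0\}\bigr)
=
m_i^{\mathrm{eff}}(1;x)-m_i^{\mathrm{eff}}(0;x),
\end{equation}
which is the heterogeneous analogue of Lemma \ref{lem:mean_snap_adv}.

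Second, I would apply the first-order Taylor expansion $\sigmoid(t)=\tfrac12+\tfrac{t}{4}+o(t)$ around $t=0$, exactly as in the proof of Theorem \ref{thm:2}. Conditioning on $x$ and using the snapshot logit \eqref{eq:snapshot_logit_1},
\begin{equation}
\mathbb{P}(x_i^+=1\mid x)=\mathbb{E}\bigl[\sigmoid(\beta\,\widehat\Delta_i^{\mathrm{S}}(y_i))\mid x\bigr]
=\tfrac12+\tfrac{\beta}{4}\,\mathbb{E}\bigl[\widehat\Delta_i^{\mathrm{S}}(y_i)\mid x\bigr]+o(\beta).
\end{equation}
Substituting the expression from the previous step and multiplying by the uniform selection factor $1/n$ yields \eqref{eq:small_beta_het_forward}.

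There is no real obstacle here: the proof is essentially a decoration of Theorem \ref{thm:2} with per-edge $\kappa_{ij}$. The only mild point worth making explicit is that the $o(\beta)$ remainder in the sigmoid expansion is uniform in $y_i$ because $\widehat\Delta_i^{\mathrm{S}}(y_i)$ is bounded on the finite state space (bounded by $\sum_{j\in N(i)} v_{ij}$), so the expectation of the remainder is still $o(\beta)$ by dominated convergence. Symmetry of $\kappa_{ij}$ under Assumptions \ref{ass:heterogeneous_bsc}--\ref{ass:heterogeneous_bec} is not needed for this particular statement — it will only matter when one passes from this local expansion to a Gibbs-like interpretation with potential $\Phi_{\mathrm{eff}}$, in the spirit of Corollary \ref{cor:gibbs_like_first_order_snap}.
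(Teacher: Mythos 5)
Your proof is correct and follows exactly the route the paper intends: the paper omits an explicit proof of Theorem~\ref{thm:small_beta_het}, treating it as the per-edge decoration of Lemma~\ref{lem:mean_snap_adv} and Theorem~\ref{thm:2} that you carry out, with $\kappa$ replaced by $\kappa_{ij}$ inside the sum so that the conditional mean of the snapshot advantage becomes $m_i^{\mathrm{eff}}(1;x)-m_i^{\mathrm{eff}}(0;x)$. Your added remarks --- that the $o(\beta)$ remainder is controlled because $\widehat\Delta_i^{\mathrm{S}}(y_i)$ is bounded and takes finitely many values, and that symmetry of $\kappa_{ij}$ is not needed for this local expansion --- are both accurate.
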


\begin{remark}
Heterogeneity in the fast regime simply rescales edges via $w_{ij}$ in \eqref{eq:w_ij_def}.
In the snapshot regime, the dynamics remain nonreversible in general, but their high-temperature drift is governed by the same effective local field.
\end{remark}
\begin{figure*}
\centering
\begin{minipage}[b]{0.23\linewidth}
  \centering
     \centerline{\includegraphics[width=4.5cm]{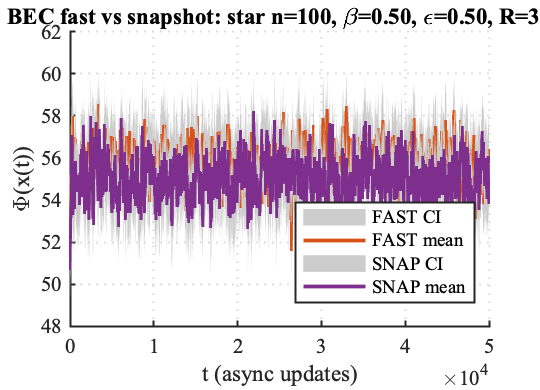}}
  \centerline{\scriptsize (a) $\varepsilon=0.5$, star network, $\beta=0.5$.}
\end{minipage} 
\hfill 
\begin{minipage}[b]{0.23\linewidth}
  \centering
     \centerline{\includegraphics[width=4.5cm]{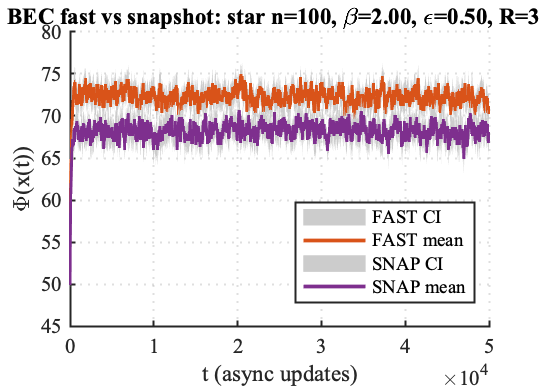}}
  \centerline{ \scriptsize (b) $\varepsilon=0.5$, star network, $\beta=2$.}
\end{minipage}
\hfill 
\begin{minipage}[b]{0.23\linewidth}
  \centering
     \centerline{\includegraphics[width=4.5cm]{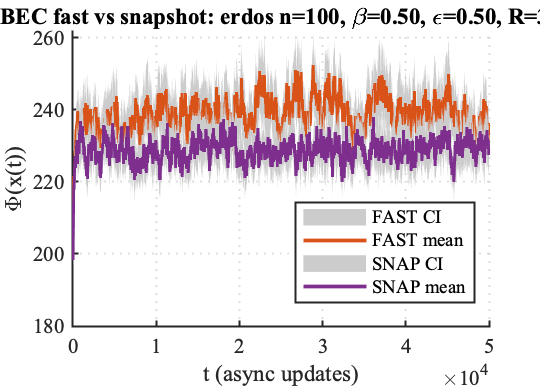}}
  \centerline{ \scriptsize (c) $\varepsilon=0.5$, erdos network, $\beta=0.5$.}
\end{minipage}
\hfill 
\begin{minipage}[b]{0.23\linewidth}
  \centering
     \centerline{\includegraphics[width=4.5cm]{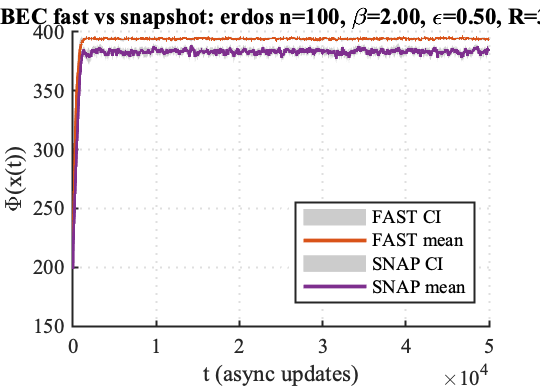}}
  \centerline{ \scriptsize (d) $\varepsilon=0.5$, erdos network, $\beta=2$.}
\end{minipage}
\caption{Means and 95\% confidence intervals of the steady-state coordination potential for snapshot and fast communication over BEC links across different network topologies. In the high-temperature regime (
$\beta=0.5$) the two dynamics exhibit comparable performance, as predicted by the first-order approximation. At lower temperatures ($\beta=2$) fast communication significantly outperforms snapshot communication and exhibits reduced variability, owing to channel-averaged updates, whereas snapshot communication displays higher variance due to its dependence on single-shot channel observations.}
\label{fig:quantizers1}
\end{figure*}

\section{Numerical Results}

This section illustrates the theoretical results through numerical experiments.
Our objectives are threefold: 
(i) to validate the separation between snapshot and fast communication regimes,
(ii) to demonstrate the interpolation induced by a finite communication budget $K$,
and (iii) to quantify the impact of heterogeneous link reliabilities.
All experiments simulate the actual asynchronous log-linear learning dynamics described
in Sections~III–V.

\subsection{Experimental Setup}

We consider the following representative network topologies:
\begin{itemize}
\item \emph{Ring}: each node connected to its two nearest neighbors;
\item \emph{Grid}: a $10\times10$ two-dimensional lattice;
\item \emph{Erd\H{o}s--R\'enyi}: random graph $G(n,q)$ with $q=0.08$;
\item \emph{Star}: one central node connected to all others.
\end{itemize}
Unless stated otherwise, all edges have unit weights $v_{ij}=1$.
All graphs have $n=100$ nodes.

At each discrete time step, a single agent is selected uniformly at random and updates
its action according to the logit rule with inverse temperature $\beta$.
In the snapshot regime, each update uses one noisy observation per incident edge.
In the fast regime, agents update using channel-averaged payoffs.
For the finite-$K$ regime, each update uses $K$ independent channel uses per neighbor.

For BSC experiments we use
\[
p \in \{0,\,0.05,\,0.1,\,0.2\},
\]
and for BEC experiments
\[
\epsilon \in \{0,\,0.1,\,0.3,\,0.5\}.
\]
Unless otherwise specified, $\beta=2$.

Each simulation is run for a sufficiently long horizon to ensure convergence,
with an initial burn-in period discarded.
Steady-state performance is estimated by time-averaging the potential $\Phi(x(t))$.
All reported results are averaged over $25$ independent runs with different random seeds.
Shaded regions indicate $95\%$ confidence intervals.

\subsection{Fast vs.\ Snapshot Communication}

Figures~1 and~2 show the steady-state coordination potential under snapshot and fast
communication for BSC and BEC links, respectively, across different network topologies.

For small inverse temperature ($\beta=0.5$), the two regimes exhibit nearly identical
mean performance.
This behavior is consistent with the high-temperature expansion derived in
Theorems~2 and~9, which shows that the drift of snapshot dynamics matches that of the
fast Gibbs sampler to first order in $\beta$.

As the inverse temperature increases ($\beta=2$), a clear separation emerges.
Across all topologies and both channel models, fast communication consistently achieves
higher steady-state potential than snapshot communication.
Moreover, the fast regime exhibits significantly reduced variability across runs.
This reduced dispersion is a direct consequence of channel-averaged updates,
which suppress randomness due to instantaneous channel realizations.
In contrast, snapshot communication relies on single-shot observations and therefore
exhibits higher variance and less stable convergence.

These results empirically confirm the qualitative distinction between equilibrium
(Gibbs) behavior in the fast regime and non-equilibrium behavior in the snapshot regime.

\subsection{Finite-$K$ Communication Budget}
Figure~3 illustrates the effect of a finite communication budget under BSC links.
The parameter $K$ denotes the number of independent channel uses per neighbor and update.
The fast regime is shown as a reference line.

As $K$ increases, finite-$K$ communication progressively averages out channel noise and
the resulting steady-state performance converges monotonically toward the fast regime.
The convergence exhibits diminishing returns: moderate values of $K$ (typically
$K\approx5$–$10$) already capture most of the performance gain relative to snapshot
communication.

This behavior aligns with the kernel convergence and stationary distribution convergence
results in Theorems~4 and~5.
From a communication-theoretic perspective, finite-$K$ communication corresponds to
retransmissions or repetition coding, providing a concrete tradeoff between communication
resources and coordination quality.

\subsection{Heterogeneous Link Reliabilities}

Figure~4 examines coordination performance under heterogeneous BSC links, where edge
reliabilities are drawn from a specified range.
In the fast regime, the dynamics behave as predicted by the effective potential
$\Phi_{\mathrm{eff}}$ in~(68): heterogeneous reliabilities simply rescale edge weights
and the learning dynamics remain Gibbs.

In contrast, snapshot communication is more sensitive to unreliable edges.
As link heterogeneity increases, the performance gap between fast and snapshot
communication widens.
This effect is particularly pronounced in topologies with hub nodes, such as star
networks, where a single poorly conditioned edge can disproportionately influence local
updates.

These observations are consistent with Theorem~9 and Remark~5, which show that while the
snapshot regime remains non-reversible, its high-temperature drift is governed by the
same effective local field as the fast regime.

\subsection{Discussion}

Taken together, the numerical results support the main theoretical conclusions of the
paper.
Fast communication yields equilibrium behavior characterized by a Gibbs distribution,
with reduced variability and predictable dependence on channel reliability.
Snapshot communication, while simpler, induces a non-equilibrium process whose
performance degrades and becomes more variable as the system moves away from the
high-temperature regime.
The finite-$K$ model provides a principled interpolation between these extremes and
highlights the role of communication resources in shaping distributed learning outcomes.

\begin{figure}[htbp]
\centering
\begin{minipage}[b]{0.48\linewidth}
     \centerline{\includegraphics[width=4.1cm]{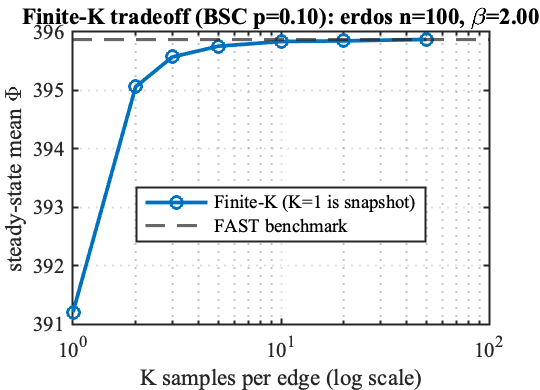}}
  \centerline{\scriptsize (a) BSC $p=0.1$, Erdos network, $\beta=2$.}
\end{minipage}  
\hfill
\begin{minipage}[b]{0.48\linewidth}
     \centerline{\includegraphics[width=4.1cm]{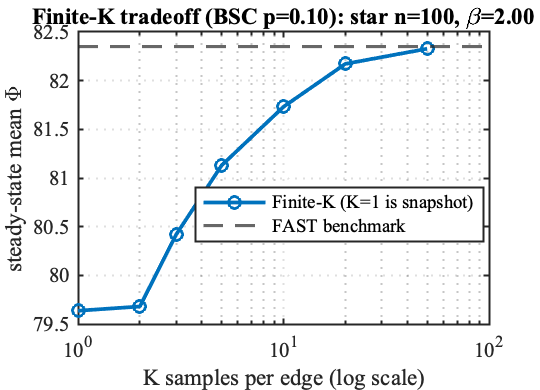}}
  \centerline{ \scriptsize (b) BSC $p=0.1$, star network, $\beta=2$.}
\end{minipage}
\caption{Steady-state coordination performance under finite-$K$ communication over BSC links across different network topologies, with the fast (channel-averaged) regime shown as a reference line. The parameter $K$ denotes the number of independent channel uses per neighbor and update. As $K$ increases, finite-$K$ communication progressively averages out channel noise and the resulting performance converges toward the fast regime. The improvement is monotone but exhibits diminishing returns, indicating that moderate values of $K$ capture most of the benefit of expectation-based updates.}
\label{fig:3}
\end{figure}

\begin{figure}[htbp]
\centering
\begin{minipage}[b]{0.48\linewidth}
     \centerline{\includegraphics[width=4.1cm]{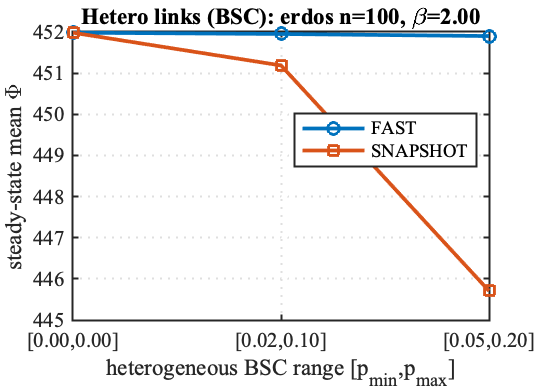}}
  \centerline{\scriptsize (a) Erdos network, $\beta=2$.}
\end{minipage}  
\hfill
\begin{minipage}[b]{0.48\linewidth}
     \centerline{\includegraphics[width=4.1cm]{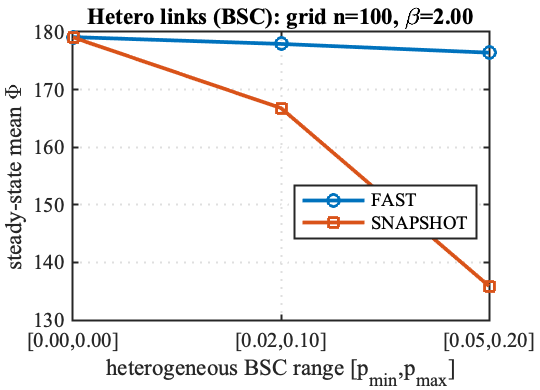}}
  \centerline{ \scriptsize (b) grid network, $\beta=2$.}
\end{minipage}
\caption{Steady-state coordination performance under heterogeneous BSC links, where edge reliabilities are drawn from a specified range. As link heterogeneity increases, the performance gap between fast and snapshot communication widens. Fast communication mitigates the effect of unreliable edges through expectation-based updates, while snapshot communication exhibits greater sensitivity to link variability.}
\label{fig:3}
\end{figure}

\section{Conclusion}

This paper studied distributed log-linear learning under noisy communication, with a
particular focus on how communication primitives fundamentally shape the long-run behavior
of learning dynamics.
We distinguished between two canonical regimes: \emph{snapshot communication}, in which
agents act on one-shot noisy observations, and \emph{fast communication}, in which agents
update using channel-averaged information.

Our analysis showed that these two regimes induce qualitatively different stochastic
processes.
In the fast regime, the learning dynamics reduce exactly to a Gibbs sampler associated with
a scaled coordination potential, yielding a reversible Markov chain with a closed-form
stationary distribution.
Channel noise affects the system only through a single reliability parameter, resulting in
a precise temperature-scaling interpretation.
In contrast, snapshot communication induces a non-reversible, non-equilibrium Markov
process whose stationary distribution admits no closed-form expression.
Nevertheless, we established that snapshot dynamics are \emph{Gibbs-like} in the
high-temperature regime, providing a first-order analytical link between the two settings.

We further introduced a finite-$K$ communication model that interpolates between snapshot
and fast communication.
This model captures practical communication constraints such as retransmissions or limited
averaging and yields a smooth performance transition as the communication budget increases.
Both theoretical results and numerical experiments demonstrate diminishing returns in $K$,
highlighting a fundamental tradeoff between communication cost and coordination
performance.

Numerical results across multiple network topologies and channel models validated the
theoretical predictions.
Fast communication consistently achieved higher steady-state performance and exhibited
lower variability due to its expectation-based updates, while snapshot communication was
more sensitive to instantaneous channel realizations and heterogeneous link reliabilities.
Exact computations for small networks and large-scale simulations together provided a
complete characterization of steady-state behavior across regimes.

From a broader perspective, the results clarify when distributed learning dynamics can be
interpreted as equilibrium sampling procedures and when they instead operate far from
equilibrium due to communication constraints.
These distinctions are particularly relevant for distributed optimization and learning
systems operating over unreliable or bandwidth-limited networks.

Several directions for future work are suggested by this study.
Extensions to multi-action games, asymmetric or time-varying channels, adaptive
communication budgets, and coupling with learning-rate or temperature schedules are
natural next steps.
More broadly, the framework developed here provides a principled foundation for analyzing
distributed learning algorithms at the intersection of game theory, information theory,
and networked control.

\bibliographystyle{IEEEtran}
\bibliography{refs}

\appendices

\section{Proofs for Section~\ref{sec:variational}}
\label{app:variational_proofs}

\subsection{Proof of Theorem~\ref{thm:variational_gibbs}}

Using \eqref{eq:pi_fast}, write
\begin{equation}
\log \pi_\beta^{\mathrm{F}}(x)
=
\beta\kappa\,\Phi(x)-\log Z_\beta^{\mathrm{F}}.
\label{eq:log_pi_app}
\end{equation}
Consider the KL divergence $D(\mu\Vert\pi_\beta^{\mathrm{F}})\ge 0$,
\begin{equation}
D(\mu\Vert\pi_\beta^{\mathrm{F}})
\triangleq
\sum_x \mu(x)\log\frac{\mu(x)}{\pi_\beta^{\mathrm{F}}(x)}.
\label{eq:KL_def_app}
\end{equation}
Expanding and substituting \eqref{eq:log_pi_app} gives
\begin{equation}
D(\mu\Vert\pi_\beta^{\mathrm{F}})
=
-H(\mu)
-\beta\kappa\,\mathbb{E}_\mu[\Phi(X)]
+\log Z_\beta^{\mathrm{F}}.
\label{eq:KL_expand_app}
\end{equation}
Rearranging \eqref{eq:KL_expand_app} yields
\begin{equation}
\mathcal{J}_{\beta}(\mu)
=
\frac{1}{\beta}\log Z_\beta^{\mathrm{F}}
-\frac{1}{\beta}D(\mu\Vert\pi_\beta^{\mathrm{F}})
\le
\frac{1}{\beta}\log Z_\beta^{\mathrm{F}}.
\label{eq:J_upper_app}
\end{equation}
Equality holds iff $D(\mu\Vert\pi_\beta^{\mathrm{F}})=0$, i.e., $\mu=\pi_\beta^{\mathrm{F}}$.
This proves \eqref{eq:argmax_gibbs} and \eqref{eq:opt_value_logZ}.

\subsection{Proof of Theorem~\ref{thm:gap_bound}}

Let $x^\star\in\arg\max_x \Phi(x)$ and let $\delta_{x^\star}$ be the point mass at $x^\star$.
Then $H(\delta_{x^\star})=0$ and $\mathbb{E}_{\delta_{x^\star}}[\Phi(X)]=\Phi^\star$.
By optimality of $\pi_\beta^{\mathrm{F}}$,
\begin{equation}
\mathcal{J}_{\beta}(\pi_\beta^{\mathrm{F}})
\ge
\mathcal{J}_{\beta}(\delta_{x^\star})
=
\kappa\,\Phi^\star.
\label{eq:J_compare_app}
\end{equation}
Expanding and rearranging yields
\begin{equation}
\Phi^\star-\mathbb{E}_{\pi_\beta^{\mathrm{F}}}[\Phi(X)]
\le
\frac{1}{\beta\kappa}\,H(\pi_\beta^{\mathrm{F}}).
\label{eq:gap_via_entropy_app}
\end{equation}
Since $\pi_\beta^{\mathrm{F}}$ is supported on $\{0,1\}^n$, $H(\pi_\beta^{\mathrm{F}})\le n\log 2$.
Substituting gives \eqref{eq:gap_bound_main}.

\end{document}